\newcommand{\maria}[1]{{\color{magenta}#1}}
\newcommand{\minor}[1]{{\color{magenta}#1}}
\newcommand{\rvline}{\hspace*{-\arraycolsep}\vline\hspace*{-\arraycolsep}}
\newtheorem{theorem}{Theorem}
\newtheorem{lemma}[theorem]{Lemma}
\newtheorem{definition}[theorem]{Definition}
\newtheorem{proposition}[theorem]{Proposition}
\newenvironment{proof}{\medskip\noindent\textbf{Proof.}}{\hfill$\blacksquare$\medskip}
\newenvironment{proof-of}[1]{\medskip\noindent\textbf{Proof of {#1}.}}{\hfill$\blacksquare$\medskip}
\begin{document}


\title
{Simplifying the simulation of local Hamiltonian dynamics}

\author{Ayaka Usui}
\email{ayaka.usui@uab.cat}
 \affiliation{Departament de F\'{i}sica Qu\`{a}ntica i Astrof\'{i}sica, Institut de Ci\`{e}ncies del Cosmos (ICCUB), Facultat de F\'{i}sica, Universitat de Barcelona, Mart\'{i} i Franqu\'{e}s, 1, E08028 Barcelona, Spain}
 \affiliation{Grup d'\`{O}ptica, Departament de F\'{i}sica, Universitat Aut\`{o}noma de Barcelona, 08193 Bellaterra, Spain}

 \author{Anna Sanpera}
 \email{anna.sanpera@uab.cat}
 \affiliation{F\'{\i}sica Te\`{o}rica: Informaci\'{o} i Fen\`{o}mens Qu\`{a}ntics, %
	Departament de F\'{\i}sica, Universitat Aut\`{o}noma de Barcelona, 08193 Bellaterra, Spain}
\affiliation{ICREA, Pg. Llu\'is Companys 23, 08010 Barcelona, Spain}

 \author{Mar\'ia Garc\'ia D\'iaz}
 \email{maria.garcia.diaz@upm.es}
 \affiliation{Departamento de Matem\'atica Aplicada a la Ingenier\'ia Industrial,
Universidad Polit\'ecnica de Madrid, 28006 Madrid, Spain}

\date{\today}

\begin{abstract}

Local Hamiltonians, $H_k$, describe non-trivial $k$-body interactions in quantum many-body systems. Here, we address the dynamical simulatability of a $k$-local Hamiltonian by a simpler one, $H_{k'}$, with $k'<k$, under the realistic constraint that both Hamiltonians act on the same Hilbert space.
When it comes to exact simulation, we build upon known methods to derive examples of $H_k$ and $H_{k'}$ that simulate the same physics. We also address the most realistic case of approximate simulation. There, we   
upper-bound the error up to which a Hamiltonian can simulate another one, regardless of their internal structure, and show an example suggesting that the accuracy of a $(k'=2)$-local Hamiltonian to simulate $H_{k}$ with $k>2$ increases with $k$.
Finally, we propose a method to search for
the $k'$-local Hamiltonian that simulates, with the highest possible precision, the short time dynamics of a given $H_k$ Hamiltonian.

\end{abstract}

\maketitle

\section{Introduction}\label{sec:intro}

A quantum simulator (QS) aims at replicating the  physics of another quantum system, the properties of which are extremely difficult to obtain~\cite{Feynman1982}.
Currently, most of the existing QS are analog, i.e.,  naturally realise the physics of the target system and mimic its properties. Moreover, they are highly specialized 
experimental platforms acting as  single-purpose simulators, and
all of them, independently of the physical architecture on which they are implemented, face some problems regarding applicability, scalability, complexity, state preparation, control, and measurement (see e.g. \cite{altman2021quantum} and references therein). 
It is desirable and expected that in the forthcoming years, some analog QSs will evolve into highly flexible devices, 
and also digital QSs, i.e., circuit-based quantum computers, will be further developed.
Both classes of QS will be capable of simulating different quantum systems, preparing specific quantum states on demand, or analyzing their dynamics. 
This also includes 
the current NISQ (Noise Intermediate Scalable Quantum) platforms such as superconducting qubits, cold atoms and ions, Rydberg atoms, etc.
Such versatility demands the development of theoretical frameworks capable of assessing the possibilities and applications of QS in areas as diverse as condensed-matter, high-energy physics, quantum chemistry, quantum gravity, out-of-equilibrium quantum physics or others~\cite{lewenstein2012, georgescu2014,altman2021quantum,Cirac2012,Johnson2014}.



As it 
is reflected in the literature~\cite{Lloyd1996,georgescu2014,suba2016,Cubitt2018,Keever2023}, there are different approaches to define 
what are the general conditions that a QS should fulfill.
Those are obviously linked to the particular requirements needed to achieve a given goal, which in turn can also depend on the specific platform on which the QS will be implemented.

On the one hand, 
there is the strongest notion of a ``universal'' QS: 
a quantum many-body system simulates another one if it is able to reproduce its entire physics, i.e., it replicates its eigenstates, full energy spectrum, time evolution, any local noise process, correlation functions, observables, and thermal properties. This is the path developed in a series of very remarkable works ~\cite{Cubitt2018,Kohler2020,kohler2022,zhou2021strongly} based on a perturbative approach (perturbative gadgets) proposed initially in~\cite{kempe2005}. 
Within this approach the authors  proved that the physics of any quantum  many-body system can be replicated by certain 2-local spin-lattice models. The latter should be embedded in a Hilbert space significantly larger than that of the system to be replicated.  Even a simple translationally  invariant spin chain in 1D  turns out to be a \textit{universal simulator} in the above sense~\cite{Kohler2020}.
Nevertheless, such an approach is not constructive and comes with the additional challenge of requiring a scale-up of the simulation platforms, either by increasing the local dimension of the many-body system acting as a simulator and/or demanding an exquisite control over the interactions, which cannot be implemented in current and/or near future experiments.
On the other hand, there are less demanding definitions of quantum simulation where the spell is put on simulating either the ground state or the low energy physics of a complex quantum many-body system or some restricted dynamics. Simulating the dynamics of many-body quantum systems is, in general, a very difficult task for a classical or even for a NISQ device. Some advances have been recently proposed in this direction. For example, the purpose of \cite{suba2016,gietka2021} is to find a QS, whose time evolution in a restricted subspace is identical to that of the target system. In \cite{suba2016}, the goal is to increase the locality of the QS, 
and this is achieved by requiring an additional qubit on each $k$-body interaction. On the contrary,  in \cite{gietka2021}, the dimension of the QS does not increase, but the locality of both Hamiltonians remains the same. 


Our work is aligned with the latter ones in that our interest is placed on simulating dynamics of a target $k$-local Hamiltonian using a $k'$-local Hamiltonian for $k'<k$, i.e., the simulator is more local than the target. 
We do not attempt to embed the QS in a Hilbert space larger than the one associated to the target Hamiltonian, given that adding qudits to an experimental platform remains a challenging task. Also, for simplicity, we restrict our Hamiltonians to be 1D. 
Our purpose is to investigate the simulation of quantum many-body dynamics with the following two aims: simplifying complexity while maintaining the local dimension of the parties and the total dimension of the Hamiltonian.

Our work is organized as follows. After presenting some 
preliminary concepts, in Section \ref{exact} we focus on exact dynamics and show, by means of an example, 
how some dynamics of a target Hamiltonian, $H_T$, 
can be replicated exactly by a simulator, $H_{QS}$, with simplified interactions 
in the same Hilbert space.
In this example both Hamiltonians commute, which is a quite exceptional condition. Nevertheless, when considering non-commuting Hamiltonians, i.e., $[H_T,H_{QS}]\neq 0$, 
it is still possible to reproduce some dynamical behaviour 
on a certain energy subspace, providing that both Hamiltonians have some eigenstates in common. It is hard to search for such shared subspaces by diagonalizing large Hamiltonians, and thus we present a generic upper bound to the dimension of such subspaces. In Section \ref{approx}, we consider the most realistic case of non-exact simulation. There, we provide an upper bound to the error up to which any Hamiltonian is able to simulate another one, irrespectively of their structure. In an attempt to understand how such upper-bound depends on locality, we present an interesting case suggesting
that the precision with which a certain 2-local QS simulates a broad family of target Hamiltonians
increase while the locality of the target Hamiltonian decreases. This suggests that, at least in certain cases
and in reduced dimensionalities, a $k$-local $H_T$ 
could be simulated better by a 2-local $H_{QS}$ as $k$ increases,
or in other words, the less local $H_T$ is.  
Finally, after showing that the worst-case error of simulation decreases when considering short times as expected, 
we 
develop a simple algorithm to find the $k'$-local Hamiltonian that best simulates a given \mbox{$k$-local} Hamiltonian in the short-time regime. 

\section{Preliminary concepts}
Most of quantum many-body systems are described by local Hamiltonians, and a vast majority of them only involve 2-body interactions. However, there are well known examples of relevant many-body Hamiltonians  (lattice gauge theories, Kitaev topological models, etc.) where interactions are not 2-body and whose properties are very difficult to analyze. To fix concepts, we denote by 
${\cal{H}}_N$  an $N$-dimensional Hilbert space and by $\mathcal{B}({\cal{H}}_N)$  the set of  its bounded operators. An $n$-qudit \emph{$k$-local} Hamiltonian $H\in \mathcal{B}({\cal{H}}_{d^n})$ is of the form $H=\sum_{i=1}^{M}h_i$, where $h_i$ is a Hamiltonian acting non-trivially on at most  $k$ parties, and $M$ is some positive integer. An $n$-qudit $k$-local Hamiltonian with $k=n$ is dubbed \emph{nonlocal}. Furthermore, a $k$-local Hamiltonian is said to be \emph{more local} than a $k'$-local Hamiltonian if $k<k'$. 
Note that  $k$-locality does not
necessarily imply
interaction among $k$ next neighbouring objects, and that there is no restriction on spatial locality.

We denote by $H_T\in \mathcal{B}({\cal{H}}_{N_{T}})$ the Hamiltonian describing the target quantum system, the time evolution of which is to be replicated. The system achieving such quantum simulation will, in turn, be described by a Hamiltonian $H_{QS}\in \mathcal{B}({\cal{H}}_{N_{QS}})$. As discussed previously, here we impose that $N:=N_T=N_{QS}$. 
We hereby introduce our precise definition of simulation ($\hbar=1$ is set throughout the whole manuscript): 

\begin{definition}
A Hamiltonian $H_{QS}$ $\epsilon$-simulates 
a target Hamiltonian $H_T$  at state $\ket{\psi}$ and time $t$ if
\begin{equation} \label{def_simul}
  |\bra{\psi}e^{itH_{QS}}e^{-itH_{T}}\ket{\psi}|\geq 1-\epsilon,  
\end{equation} 
with $\epsilon\in(0,1]$.
\end{definition}

Notice that $\epsilon=0$ implies that the fidelity is 1, meaning that
$H_{QS}$ simulates $H_T$ exactly in a subspace $\mathcal S\subset \mathcal H_N$, where $|\psi\rangle$ belongs. 
Maximizing $\epsilon$ over initial states and times informs us about the minimal performance of a given $H_{QS}$ at simulating a certain $H_T$. The following definition can  be put forward:

\begin{definition}
    The worst-case fidelity of simulation for given Hamiltonians $H_{QS}$ and $H_T$ is defined as
    \begin{equation}
        \min_{\psi,t} |\bra{\psi}e^{itH_{QS}}e^{-itH_{T}}\ket{\psi}|.
    \end{equation}
\end{definition}




\section{Exact simulation}\label{exact}

In this section we address the scenario of error-free simulation, i.e., when $\epsilon=0$. We illustrate this scenario by showing examples of both commuting and non-commuting Hamiltonians which generate the same dynamics despite their different locality. 


\subsection{Commuting Hamiltonians}

Using the Baker-Campbell-Hausdorff (BCH) formula, one can write condition (\ref{def_simul}) as
\begin{equation}\label{def_exactsimul}
    |\bra{\psi}e^{i\tilde{H}(t)}\ket{\psi}|= 1,
\end{equation}
with $\tilde{H}(t):=t(H_{QS}-H_T)+\frac{it^2}{2}[H_{QS},-H_T]+...$, where the remaining terms involve higher order commutators of $H_{QS}$ and $H_T$. In order for the BCH series to converge absolutely in this case, note that $H_{QS}$ and $H_T$ must fulfill $\norm{H_{QS}}+\norm{H_T}<\pi/t$, for a given submultiplicative norm $\norm{\cdot}$ \cite{Mityagin1990,Blanes2004}. Given a target Hamiltonian $H_T$, one can now ask for which 
$H_{QS}$, initial states $\ket{\psi}$ and times $t$, condition (\ref{def_exactsimul}) is fulfilled. An answer was provided in \cite{gietka2021}, which we briefly sketch in what follows: consider $H_{QS}$ such that $[H_{QS},H_{T}]=0$, and define the connector $h:=H_{QS}-H_{T}$. Now write the initial state in the basis $\{\ket{\phi_j}\}_j^N$ which diagonalizes $h$: \mbox{$\ket{\psi}=\sum_{j=1}^N a_j \ket{\phi_j}$}, where $a_j=\braket{\phi_{j}}{\psi}$. Then condition (\ref{def_exactsimul}) becomes
$\left|\sum_{j=1}^{N} |a_j|^2 e^{i\lambda_j t} \right|=1$, with $\lambda_j$ the $j$-th eigenvalue of $h$. If $a_j\neq 0$ for each $j\in[1,N]$, the previous condition holds if $h$ is fully degenerate, implying that $H_{QS}$ is simply $H_T$ plus some multiple of the identity. In order to avoid this trivial scenario, we naturally require that $\ket{\psi}$ is only spanned by the degenerate eigenvectors of $h$, i.e., $a_j\neq 0$ for all $j$ such that $\ket{\phi_j}$ is degenerate  and $a_j=0$ otherwise. All in all, any Hamiltonian $H_{QS}$ that commutes with a target Hamiltonian $H_T$ is able to simulate $H_T$ exactly at any time $t$  and at any initial state spanned by the degenerate eigenvectors of the corresponding connector. Note that $e^{ith}$ converges absolutely, as does the exponential of any square matrix.



As an example of this approach, in \cite{gietka2021} it was shown that an infinite-range-interaction (2-body all with all interactions) Hamiltonian can be simulated using a nearest-neighbor-interaction model with a staggered field. Notice that, in this case, both $H_{T}$ and $H_{QS}$ are 2-local Hamiltonians. Here, we demonstrate, by means of a toy model, that the same approach can be employed to find instances of Hamiltonians \emph{with different localities} that also lead to the same dynamics.



As an ideal terrain to understand such dynamics, let us consider a simple case in which only 4-qubits are involved. Let the 3-local Hamiltonian
\begin{equation} \label{eq:H3local}
 H_T=\sum_{j=1}^4(J_3\sigma_z^{j}\sigma_z^{j+1}\sigma_z^{j+2}+h_x\sigma_x^{j}),   
\end{equation}  
describing a one-dimensional system with periodic boundary condition, be our target Hamiltonian. By engineering  the (2-local) Heisenberg XYZ model 
\begin{equation} 
    H_{QS}=\sum_{j=1}^4(J_x\sigma_x^j\sigma_x^{j+1}+J_y\sigma_y^j\sigma_y^{j+1}+J_z\sigma_z^j\sigma_z^{j+1})
\end{equation}
we are able to  obtain a QS Hamiltonian which commutes with $H_T$ and such that the corresponding connector has some degeneracy. Notice that \mbox{$\norm{[H_T,H_{QS}]}^2_{\text{HS}}=|J_3(J_x-J_y)|^2/2+8|h_x(J_y-J_z)|^2$}, where $\norm{X}_{\text{HS}}=\left(\sum_{i,j=1}^m |X_{ij}|^2\right) ^{1/2}$ is the  Hilbert-Schmidt (HS) norm of an $m$-dimensional matrix $X$. Therefore, choosing $J:=J_x=J_y=J_z$ (i.e., the Heisenberg XXX model), commutativity between $H_T$ and $H_{QS}$ is granted. By tuning $J$ (see that $J_3$ and $h_x$ are fixed) and comparing the eigenvalues of the resulting $H_{QS}$ with those of $H_T$, it is possible to create degeneracy in the connector. The evolution under $H_T$ of any initial state belonging to the subspace where the connector is degenerate can thus be exactly reproduced by the engineered $H_{QS}$. A particular example is displayed in Appendix \ref{app:example1}.
Clearly, the same approach can also be taken for higher spatial dimensional systems, where spins have a larger number of next neighbours and thus exhibit a greater richness of interactions. 
However, since this approach requires diagonalization, implementing it in large systems is not viable. 
As a more systematic way, we introduce a method to search for the best simulator Hamiltonian in any dimension and with fixed locality in Proposition \ref{systematic}
by allowing for an error $\epsilon$ in the fidelity of simulation, and by restricting to the short-time regime.

\subsection{Non-commuting Hamiltonians}

Even if 
$[H_{QS},H_T]\neq 0$, 
simulation is feasible in some cases. If both Hamiltonians share part of their eigenstates, it naturally follows from the previous approach  that simulation can still be performed on the energy subspace spanned by those eigenstates. 
Let the set of shared eigenstates be $\Theta=\{\ket{\varphi_j}\}_{j=1}^{N_\Theta}$ (note that $N_\Theta$ depends on the basis on which each Hamiltonian is expressed) and let  the projected connector be \mbox{$h^{(\Theta)}:=\sum_{i,j=1}^{N_\Theta} \ketbra{\varphi_i}{\varphi_i} h \ketbra{\varphi_j}{\varphi_j}$}. 
Then the simulatable states $\ket{\psi}$ will be those living in the subspace where $h^{(\Theta)}$ is degenerate.  

We illustrate with an example how two non-commuting Hamiltonians with different localities may generate the same dynamics in Appendix~\ref{app:example2}. There again, we consider the $H_T$ of the previous section as our 3-local target Hamiltonian. Our QS is now a (2-local) Heisenberg XXX model with 1-body terms,
\begin{align} \label{eq:tildeHQS}
     \tilde{H}_{QS}
     &=
     J \sum_{j=1}^4
     \left(
     \sigma_x^j\sigma_x^{j+1}
     +
     \sigma_y^j\sigma_y^{j+1}
     +
     \sigma_z^j\sigma_z^{j+1}
     \right)
     \nonumber\\
     &\quad
     +b_x\sum_{j=1}^4\sigma_x^j
     +b_y\sum_{j=1}^4\sigma_y^j
     +b_z\sum_{j=1}^4\sigma_z^j,  
\end{align}
where these 1-body terms prevent $\tilde{H}_{QS}$ from commuting
with $H_T$.



 Yet a far more complex question is whether two given non-commuting high-dimensional Hamiltonians give rise to the same dynamics. The answer is generally negative, since the subset of Hamiltonians with only one common eigenstate is already of measure zero \cite{garciadiaz2022}.
 Moreover, computing the shared eigenstates entails diagonalizing such large Hamiltonians, which costs computational resources. Below, we derive an upper bound to the maximum number of such shared eigenstates, the calculation of which does not require diagonalization or computationally demanding techniques: 
\begin{lemma}\label{number_shared_ev}
The maximum number $r$ of shared eigenstates of two $N$-dimensional non-commuting Hamiltonians $H_T$ and $H_{QS}$ is  bounded as
\begin{equation}
     r \leq N-\left(\dfrac{\norm{[H_T,H_{QS}]}_{\text{HS}}}{\norm{[H_T,H_{QS}]}_2}\right)^{2},
\end{equation}
where 
$\norm{X}_2=\max_{\norm{\bf{v}}=1} (\norm{X\bf{v}})$ is the spectral norm of an $n$-dimensional matrix $X$, with $\bf{v}$ an $n$-dimensional vector and $\norm{\bf{v}}=\left( \sum_{i=1}^n|{\bf{v}}_i|^2 \right)^{1/2}$.
\end{lemma}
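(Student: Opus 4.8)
The plan is to compare the commutator $[H_T,H_{QS}]$ restricted to the subspace spanned by the $r$ shared eigenstates with its action on the orthogonal complement. Let $\{\ket{\varphi_j}\}_{j=1}^r$ be the shared eigenvectors, and extend this to an orthonormal basis of $\mathcal H_N$. In such a basis, each of $H_T$ and $H_{QS}$ is block-diagonal with a common diagonal block of size $r$ (the shared eigenspace) and an independent block of size $N-r$ on the complement. Consequently $[H_T,H_{QS}]$ vanishes identically on the shared block: the first $r$ basis vectors are annihilated by the commutator, and the only nonzero entries of $[H_T,H_{QS}]$ live in the $(N-r)\times(N-r)$ lower block.

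**From the block structure to the norm inequality.**

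The key observation is then purely linear-algebraic. If $C:=[H_T,H_{QS}]$ is supported on an $(N-r)$-dimensional subspace, then $C$ has rank at most $N-r$, i.e.\ at most $N-r$ nonzero singular values $s_1,\dots,s_{N-r}$. Now $\norm{C}_{\text{HS}}^2=\sum_i s_i^2$ while $\norm{C}_2=\max_i s_i=s_1$. Bounding each $s_i\le s_1$ gives $\norm{C}_{\text{HS}}^2\le (N-r)\,s_1^2=(N-r)\norm{C}_2^2$, hence
\begin{equation}
 \left(\frac{\norm{[H_T,H_{QS}]}_{\text{HS}}}{\norm{[H_T,H_{QS}]}_2}\right)^{2}\le N-r,
\end{equation}
which rearranges to the claimed bound. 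One should note $C\neq 0$ by the non-commuting hypothesis, so $\norm{C}_2>0$ and the ratio is well defined.

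**Anticipated obstacles.**

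The main subtlety is justifying the block-diagonal form carefully: "shared eigenstate" must mean a common eigenvector (with possibly different eigenvalues for $H_T$ and $H_{QS}$), and one has to check that $H_T$ genuinely maps the span of $\{\ket{\varphi_j}\}$ and its orthogonal complement each into itself — this uses that $H_T$ is Hermitian, so the orthogonal complement of an invariant subspace is invariant. The same holds for $H_{QS}$. Then on the shared block both act as diagonal matrices, which trivially commute, so the commutator block is zero. A second minor point is the rank-versus-support step: I would phrase it as $\operatorname{rank}(C)\le N-r$ because the image of $C$ lies in the $(N-r)$-dimensional complement (every column of $C$ corresponding to a shared basis vector is zero, and the remaining columns lie in that complement after the similarity transformation). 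Everything else is the elementary singular-value inequality above, so I do not expect any real difficulty beyond stating these structural facts precisely.
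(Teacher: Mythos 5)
Your proof is correct and follows essentially the same route as the paper: the commutator annihilates the span of the $r$ shared eigenvectors, so it is effectively supported on an $(N-r)$-dimensional subspace, and the bound follows from $\norm{[H_T,H_{QS}]}_{\text{HS}}\leq\sqrt{N-r}\,\norm{[H_T,H_{QS}]}_2$ — the paper gets this by writing the commutator as $0^{\oplus r}\oplus C$ and applying $\norm{X}_{\text{HS}}\leq\sqrt{\dim X}\,\norm{X}_2$ to the block $C$, while you get the same estimate via rank and singular values. Your explicit justification of the invariance of the shared span and its orthogonal complement (via Hermiticity) is a structural detail the paper leaves implicit, but it does not alter the argument.
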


\begin{proof}
    Consider two arbitrary $N$-dimensional square matrices $A_{\cal{X}}$ and $B_{\cal{X}}$, where $\cal{X}$ is the basis in which they are expressed. Their commutator can be written as $[A_{\cal{X}},B_{\cal{X}}]=0^{\oplus r_{\cal{X}}} \oplus C_{\cal{X}}$, with $r_{\cal{X}}\in \mathbb{N}$, \mbox{$0^{\oplus p}:= \underbrace{0\oplus ... \oplus 0}_{\substack{p}}$}, and $C_{\cal{X}}$ some traceless \mbox{$(N-r_{\cal{X}})$-dimensional} matrix (without loss of generality, we have assumed that $A_{\cal{X}}$ and $B_{\cal{X}}$  share the first $r_{\cal{X}}$ eigenvectors). It then holds that \mbox{$\norm{[A_{\cal{X}},B_{\cal{X}}]}_*=\norm{C_{\cal{X}}}_*$}, for any chosen matrix norm $*$. Let us consider the Hilbert-Schmidt norm, $\norm{\cdot}_{\text{HS}}$, and the spectral norm, $\norm{\cdot}_2$. Now, recall that \mbox{$\norm{X}_{\text{HS}}\leq \sqrt{m} \norm{X}_2$}, with \mbox{$m=\dim X$} \cite{horn2012matrix}. If matrix $X$ is a finite dimensional commutator (thus traceless), then the inequality can get saturated for even $m$ if $X$ is proportional to a diagonal matrix with half of their entries being $1$, and half of them being $-1$.
    For $n$ qubits (thus $m=2^n$), the inequality is saturated when choosing $A_{\cal{X}}=...\mathds{1}\otimes\sigma_x\otimes \mathds{1}...$ and $B_{\cal{X}}=...\mathds{1}\otimes\sigma_y\otimes \mathds{1}...$.
    It then follows that \mbox{$0\leq \norm{[A_{\cal{X}},B_{\cal{X}}]}_{\text{HS}}\leq   \sqrt{N-r_{\cal{X}}} \norm{[A_{\cal{X}},B_{\cal{X}}]}_2$}. Now, the maximum number of shared eigenvectors between $A_{\cal{X}}$ and $B_{\cal{X}}$ is given by 
    \begin{equation}
        r:=\max_{\cal{X}} r_{\cal{X}}\leq  N-\left(\dfrac{\norm{[A_{\cal{X}},B_{\cal{X}}]}_{\text{HS}}}{\norm{[A_{\cal{X}},B_{\cal{X}}]}_2}\right)^{2},
    \end{equation}
 since the considered norms are unitarily invariant. This completes the proof. 
\end{proof}

Note that the calculation of the spectral norm is efficient as it is expressible by a semidefinite program \cite{watrous2009}. 
This upper bound, which is tight for  even dimension $N$,
 informs about the maximal size that the shared subspace could have, helping to decide whether it is still worth trying to search for the shared eigenstates even by using diagonalization. Moreover, computing the upper bound to $r$ can aid in the processes of figuring out the parameters of the corresponding $H_{QS}$. Suppose one can only prepare initial states \mbox{$\ket{\psi}\in \mathcal{S}\subseteq \mathcal{H}_N$}, where $\dim \mathcal{S}\leq N$. Recall that exact simulation is granted in a subspace of at least dimension $\dim \mathcal{S}^{(\Theta)}$, where $\mathcal{S}^{(\Theta)}$ is the subspace in which $h^{(\Theta)}$ is degenerate. Hence, \mbox{$\dim \mathcal{S}^{(\Theta)}\leq r \leq N-\left(\dfrac{\norm{[H_T,H_{QS}]}_{\text{HS}}}{\norm{[H_T,H_{QS}]}_2}\right)^{2}$}. In order for the dynamics to be simulatable in this case, it is required that $\dim \mathcal{S}\leq \dim \mathcal{S}^{(\Theta)}$. If the current choice of the parameters of $H_{QS}$ leads to \mbox{$N-\left(\dfrac{\norm{[H_T,H_{QS}]}_{\text{HS}}}{\norm{[H_T,H_{QS}]}_2}\right)^{2}< \dim \mathcal{S}$}, it holds that \mbox{$\dim \mathcal{S}>\dim \mathcal{S}^{(\Theta)}$} and therefore  one must search for a different set of parameters.


In addition, we prove a necessary condition for a state to be an eigenstate of two Hamiltonians with different localities and leave its derivation in Appendix~\ref{necessary_eig}.

\section{Approximate simulation}\label{approx}
We now focus on the realistic scenario of approximate quantum simulation, i.e., the case when $H_{QS}$ simulates $H_{T}$ up to some tolerable error $\epsilon\neq 0$. Contrary to the setting with $\epsilon=0$, looking at condition (\ref{def_simul}) does not  reveal in a straightforward way how both Hamiltonians should be related in order for one to $\epsilon$-simulate the other. Nonetheless, 
we are able to provide an upper bound to the error up to which such a simulation can be performed:

\begin{theorem}\label{thm_nonexact}
\cite{referee} Every Hamiltonian $H_{QS}$ \mbox{$\epsilon^*$-simulates} any target Hamiltonian $H_T$ at any state $\ket{\psi}$ and time $t$, with 
\begin{equation} \label{eq:worsterror}
    \epsilon^*=\min\left[ 1,\dfrac{t\Delta_h}{2}\right].
\end{equation}
Here, $\Delta_{h}$ is the spectral diameter of the connector, \mbox{$h:=H_{QS}-H_{T}$}, i.e., \mbox{$\Delta_{h}=\max_{ij}|\lambda_i-\lambda_j|$}, where $\{\lambda_i\}_{i=1}^N$ are the eigenvalues of $h$. 
\end{theorem}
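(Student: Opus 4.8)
The bound splits into two regimes, and the $\epsilon^{*}=1$ piece is immediate: $|\bra{\psi}e^{itH_{QS}}e^{-itH_{T}}\ket{\psi}|\geq 0=1-1$, so every $H_{QS}$ trivially $1$-simulates every $H_T$. The real content is to show $|\bra{\psi}e^{itH_{QS}}e^{-itH_{T}}\ket{\psi}|\geq 1-t\Delta_h/2$, and the plan is to reduce this to a spectral-norm estimate on the operator $U-\mathds{1}$, where $U:=e^{itH_{QS}}e^{-itH_{T}}$.

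First I would exploit a gauge freedom. Replacing $H_{QS}\to H_{QS}+c\mathds{1}$ multiplies $e^{itH_{QS}}$ by the global phase $e^{ict}$, which leaves $|\bra{\psi}e^{itH_{QS}}e^{-itH_{T}}\ket{\psi}|$ unchanged, and it shifts the connector rigidly, $h\to h+c\mathds{1}$, so that $\Delta_h$ is also unchanged. Choosing $c=-(\lambda_{\max}+\lambda_{\min})/2$, with $\lambda_{\max},\lambda_{\min}$ the extreme eigenvalues of $h$, we may therefore assume without loss of generality that the spectrum of $h$ is symmetric about $0$, whence $\norm{h}_2=\Delta_h/2$.

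Next, since $e^{itH_{QS}}$ is unitary, $\norm{U-\mathds{1}}_2=\norm{e^{-itH_{T}}-e^{-itH_{QS}}}_2$, and I would estimate the right-hand side by a Duhamel-type interpolation: put $g(s):=e^{-itsH_{QS}}e^{-it(1-s)H_{T}}$, so that $g(1)-g(0)=e^{-itH_{QS}}-e^{-itH_{T}}$ and $g'(s)=-it\,e^{-itsH_{QS}}\,h\,e^{-it(1-s)H_{T}}$. Integrating in $s$ and using that both exponential factors are unitary gives
\begin{equation}
e^{-itH_{T}}-e^{-itH_{QS}}=it\int_{0}^{1}e^{-itsH_{QS}}\,h\,e^{-it(1-s)H_{T}}\,ds,\qquad\text{hence}\qquad \norm{U-\mathds{1}}_2\leq t\,\norm{h}_2=\frac{t\Delta_h}{2}.
\end{equation}
Finally, using $\braket{\psi}{\psi}=1$,
\begin{equation}
|\bra{\psi}U\ket{\psi}|\geq 1-|\bra{\psi}(U-\mathds{1})\ket{\psi}|\geq 1-\norm{U-\mathds{1}}_2\geq 1-\frac{t\Delta_h}{2},
\end{equation}
and combining this with the trivial non-negativity bound yields $|\bra{\psi}U\ket{\psi}|\geq 1-\min[1,t\Delta_h/2]=1-\epsilon^{*}$.

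The only step that is not bookkeeping is the gauge shift that converts $\norm{h}_2$ into $\Delta_h/2$; I expect that to be the crux, since it is precisely what makes the worst-case error depend on the spectral \emph{diameter} of the connector rather than on its norm. The interpolation estimate is standard, and it is worth stressing that it holds for every $t$ with no commutativity hypothesis and no convergence condition on a BCH-type series, in contrast with the exact-simulation analysis above; the same argument also makes manifest that the bound is automatically uniform over $\ket{\psi}$ and $t$.
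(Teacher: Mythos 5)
Your proof is correct and follows essentially the same route as the paper's: both exploit the freedom to shift $H_{QS}$ by $c\mathds{1}$ so that the relevant quantity becomes $\min_c\norm{h+c\mathds{1}}_2=\Delta_h/2$, and both obtain the factor $t\norm{h+c\mathds{1}}_2$ from a fundamental-theorem-of-calculus/Duhamel bound on the difference of the two evolutions (the paper integrates the product $e^{is\tilde H_{QS}}e^{-isH_T}$ over physical time inside the matrix element, while you interpolate in the exponent and pass through $\norm{U-\mathds{1}}_2$, a cosmetic difference). No gaps.
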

\begin{proof}
  First note that $|\bra{\psi}e^{itH_{QS}}e^{-itH_T}\ket{\psi}|=|\bra{\psi}e^{it\tilde{H}_{QS}}e^{-itH_T}\ket{\psi}|$, with $\tilde{H}_{QS}=H_{QS}+c\mathds{1}$, $c\in \mathbb{R}$. Now, it is well established that
  \begin{eqnarray}
      1-|\bra{\psi}e^{it\tilde{H}_{QS}}&e^{-itH_T}&\ket{\psi}|
      \\
      &\leq&|1-\bra{\psi}e^{it\tilde{H}_{QS}}e^{-itH_T}\ket{\psi}|\nonumber \\
      &=&\left|\bra{\psi}\int_t^0 ds\left( \frac{d}{ds}e^{is\tilde{H}_{QS}}e^{-isH_T}\right)   \ket{\psi}\right| \nonumber\\
      &\leq& \left| \int_t^0 ds \norm{\tilde{H}_{QS}-H_T}_2  \right| \nonumber \\
      &=&t \norm{\tilde{H}_{QS}-H_T}_2 \nonumber\\
       &=&t \norm{h+c\mathds{1}}_2.
  \end{eqnarray}

  Therefore, 
  \begin{eqnarray}
      |\bra{\psi}e^{itH_{QS}}e^{-itH_T}\ket{\psi}|&\geq& 1-t\min_c \norm{h+c\mathds{1}}_2 \nonumber \\
      &=& 1-t\frac{\Delta_h}{2},
  \end{eqnarray}
 and the proof is finished.
\end{proof}

From here we see that the worst-case fidelity of simulation is never smaller than $1-\epsilon^*$. We emphasize that the calculation of the error $\epsilon^*$ does not require diagonalization. The spectral diameter $\Delta_h$ is the difference between the largest and the smallest eigenvalue of $h$ and can be obtained by means of a semidefinite program, as discussed later. 
As shown in Appendix \ref{comparison_bounds}, other valid bounds can be derived for the worst-case fidelity of simulation, but they are smaller than the proved $1-\epsilon^*$.

Notice that $1-\epsilon^*$ is greater than zero if and only if \mbox{$t\Delta_{h}<2$}. 
In this regime, for fixed time $t$, we see that the smaller $\Delta_{h}$, the higher $1-\epsilon^*$. 
Now note that since $H_{QS}$ is a $k'$-local Hamiltonian and $H_T$ is a $k$-local Hamiltonian (with $k'<k$), $h$ is a proper $k$-local Hamiltonian. One could ask at this point how the spectral diameter of a $k$-local Hamiltonian depends on $k$. 
To answer this open question, research in the direction of  \cite{Harrow2017} would be required. They study the extremal eigenvalues of $k$-local Hamiltonians (with $k={\mathcal{O}}(1)$) acting on $n$ qubits, such that each qubit participates in at most $l$ terms. Instead of showing, as they do, how such extremal eigenvalues change with the interaction $l$, one would need to examine how these vary with the locality $k$. 
A numerical route can be taken to explore such behaviours for particular cases. Here we ask what kind of $H_{QS}$ yields the smallest spectral diameter $\Delta_{h}$, when $H_T$  is a Hamiltonian with $k$-body nearest neighbour interactions in the $z$-direction describing a one-dimensional array of qubits. For instance, for $k=3$ we force the target to be $H_T=C_3\sum_{j}\sigma_{j}^z\sigma_{j+1}^z\sigma_{j+2}^z$, 
where $C_3$ is just a normalization constant. As shown below, this problem can be cast as a semidefinite program: 
\begin{proposition}\label{sdp_prop}
 The following semidefinite program minimizes $\Delta_h$ when $H_{QS}$ is $k'$-local and $H_T$ is a  $k$-local Hamiltonian with nearest neighbour interactions in the $z$-direction describing a one-dimensional array of $n$ qubits:
  \begin{equation} \label{eq:SPD_1}
  \begin{split}
    \min &\ \lambda_1-\lambda_0
    \\
    \text{\rm s.t.}&\ 
    \lambda_0 \preceq H_{QS} - H_T \preceq \lambda_1 \\
    &\ 
    \tr[H_{QS} \Lambda^{(j)}_i ] =0 \; \; \forall i \text{ and }
    \;\; \forall j>k'
    \\
    &\
    \tr[H_{QS} \Lambda^{(k')}_i ] \geq \beta \; \; \forall i 
    \\
    &\ 
    H_T
    =
    C_{k} \sum_{l=1}^n
    \sigma_l^z \ldots \sigma_{l+k-1}^z  
    ,
  \end{split}
  \end{equation}
  where $C_{k}$ is a normalization constant, $\Lambda^{(j)}_i$ are the generators of all possible $j$-body interactions, and $\beta\neq 0$ is the strength of the $k'$-body interactions of $H_{QS}$.
\end{proposition}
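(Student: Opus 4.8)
The plan is to verify that the optimization problem written in Proposition~\ref{sdp_prop} is genuinely a semidefinite program and that its feasible set correctly parametrizes all admissible pairs $(H_{QS},H_T)$ of the stated form, so that its optimal value equals the minimal spectral diameter $\Delta_h$. First I would recall the elementary fact, used already in the proof of Theorem~\ref{thm_nonexact}, that for a Hermitian matrix $M$ one has $\lambda_{\min}(M)\,\mathds{1}\preceq M\preceq \lambda_{\max}(M)\,\mathds{1}$, and conversely any scalars $\lambda_0,\lambda_1$ satisfying $\lambda_0\mathds{1}\preceq M\preceq\lambda_1\mathds{1}$ obey $\lambda_0\le\lambda_{\min}(M)\le\lambda_{\max}(M)\le\lambda_1$, hence $\lambda_1-\lambda_0\ge\Delta_M$, with equality attainable. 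Applying this with $M=h=H_{QS}-H_T$ shows that minimizing $\lambda_1-\lambda_0$ over the two linear matrix inequality constraints reproduces exactly $\min\Delta_h$ provided $H_{QS}$ ranges over the correct set; the two operator inequalities are affine in the decision variables $(\lambda_0,\lambda_1,H_{QS})$, which is what makes this an SDP.

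Next I would justify the encoding of the locality constraints. Expanding any Hermitian operator on $n$ qubits in the Pauli (or, more generally, generalized Gell-Mann) basis, group the basis elements $\{\Lambda^{(j)}_i\}$ by their body-number $j$, i.e.\ by the number of non-identity tensor factors. Then ``$H_{QS}$ is $k'$-local'' (in the sense of the paper: a sum of terms each acting non-trivially on at most $k'$ parties, with no spatial-locality restriction) is equivalent to the vanishing of all expansion coefficients of body-number strictly greater than $k'$, and since $\tr[\Lambda^{(j)}_i\Lambda^{(j')}_{i'}]\propto\delta_{jj'}\delta_{ii'}$ by orthogonality of the basis, these coefficients are precisely the linear functionals $\tr[H_{QS}\Lambda^{(j)}_i]$. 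Hence the constraint $\tr[H_{QS}\Lambda^{(j)}_i]=0$ for all $i$ and all $j>k'$ is a finite set of linear equality constraints that exactly cuts out the $k'$-local operators. The inequality $\tr[H_{QS}\Lambda^{(k')}_i]\ge\beta$ with $\beta\neq0$ is a normalization/non-triviality condition ensuring the simulator genuinely contains $k'$-body terms (excluding the degenerate optimum $H_{QS}=H_T$ when $H_T$ were itself representable with lower locality, and fixing a scale so the program is bounded); it is again linear. Finally, the last line simply pins $H_T$ to the prescribed $k$-body nearest-neighbour $z$-chain times a fixed normalization constant $C_k$, so $H_T$ is not a variable at all and $h=H_{QS}-H_T$ depends affinely on the remaining decision variables.

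Putting these pieces together: the objective is linear, all constraints are either linear equalities/inequalities or linear matrix inequalities in $(\lambda_0,\lambda_1,H_{QS})$, so the problem is an SDP in standard form; and by the spectral characterization above its value coincides with the minimum of $\Delta_h=\Delta_{H_{QS}-H_T}$ over $k'$-local $H_{QS}$ (subject to the chosen normalization) and the fixed target $H_T$. This is exactly the claim. I would also remark that the number of generators $\Lambda^{(j)}_i$ with $j\le k'$ grows only polynomially in $n$ for fixed $k'$, so the program has polynomially many variables, matching the efficiency comments made after Theorem~\ref{thm_nonexact} and Lemma~\ref{number_shared_ev}.

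The main obstacle, and the part deserving the most care, is the correctness of the locality encoding rather than the SDP form, which is essentially bookkeeping. One must be precise about what ``$k'$-local'' means here — the paper allows arbitrary (non-spatial) supports, so the relevant grading is by body-number of Pauli strings, and one must confirm that a Hamiltonian all of whose Pauli components have body-number $\le k'$ is indeed expressible as $\sum_i h_i$ with each $h_i$ supported on $\le k'$ qubits (true, e.g.\ by taking the $h_i$ to be the individual Pauli terms) and conversely. A secondary subtlety is arguing that the normalization constraint with some fixed $\beta\neq0$ does not artificially exclude the true minimizer: one should note that $\Delta_h$ is invariant under $H_{QS}\mapsto H_{QS}+c\mathds{1}$ and scales linearly under $H_{QS}\mapsto \alpha H_{QS}$, $H_T\mapsto\alpha H_T$, so imposing a single nonzero value of $\tr[H_{QS}\Lambda^{(k')}_i]$ for some $i$ fixes a representative in each ray of genuinely-$k'$-local simulators without loss of generality — this point I would state carefully, since otherwise the program could be accused of minimizing over a strict subset.
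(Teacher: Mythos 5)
Your proposal is correct and follows essentially the same route as the paper, which states the proposition without a formal proof and relies precisely on the points you spell out: the variational characterization $\lambda_0\mathds{1}\preceq h\preceq\lambda_1\mathds{1}$ of the spectral diameter, the orthogonality of the generators $\Lambda^{(j)}_i$ so that $\tr[H_{QS}\Lambda^{(j)}_i]=0$ for $j>k'$ enforces $k'$-locality, and the $\beta$-constraint as a deliberate non-triviality condition. One small caution on your closing remark: since $H_T$ is fixed, rescaling $H_{QS}$ alone does not rescale $h=H_{QS}-H_T$ homogeneously, so the ray-normalization argument for the $\beta$-constraint being without loss of generality does not go through; however, the paper itself treats that constraint as an intentional restriction (forcing genuinely non-vanishing $k'$-body terms), so this does not affect the correctness of the proposition as stated.
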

Note that the second constraint prevents $H_{QS}$  from having local terms with localities larger than  $k'$. Also, the third constraint forces $H_{QS}$ to present non-vanishing $k'$-body interactions.
Fig.~\ref{fig:diameter} shows the minimal spectral diameter $\Delta_{h}$  when $H_T$ is a Hamiltonian with $k$-body nearest neighbour  interactions ($k=3,4,5$) describing an array of 5 qubits, and $H_{QS}$ has only $2$-local interactions of strength larger than or equal to $\beta=0.01$. Periodic boundary conditions are enforced on $H_T$. The minimal value of $\Delta_{h}$ decreases when $H_T$ is less local, suggesting that such kind of target Hamiltonians are
simulated to a better precision the less local they are. 




\begin{figure}
\centering
\includegraphics[width=0.65\linewidth]{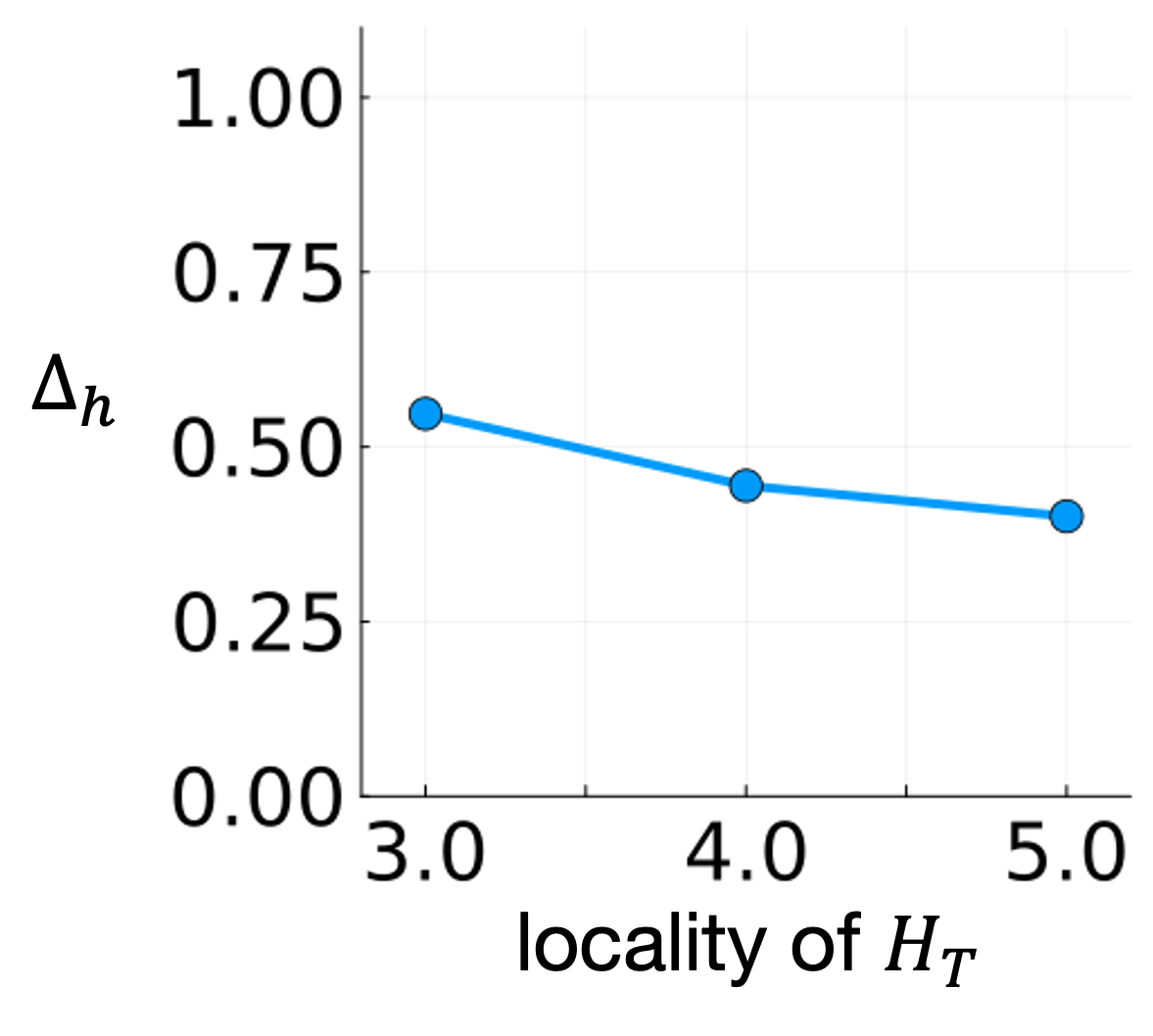}
\caption{
Spectral diameter $\Delta_{h}$ minimised by the SDP~\eqref{eq:SPD_1} as a function of the locality of $H_T$.
}
\label{fig:diameter}
\end{figure}

Fixing now $\Delta_h$ in \mbox{$t\Delta_h<2$}, 
simulators are expectedly more precise when restricting to short times.
Here, we provide a  program to find the $k'$-local Hamiltonian $H_{QS}$ that best simulates a target $k$-local Hamiltonian $H_T$ at a given state $\ket{\psi}$ for short times (in the precise sense that terms in $\mathcal{O}(t^2)$ are negligible):  
 \begin{proposition}\label{systematic}
 The following  program yields  the minimum HS norm between states \mbox{$\ket{\psi(t)}=e^{-itH_{T}}\ket{\psi_0}$} and $\ket{\phi(t)}=e^{-itH_{QS}}\ket{\psi_0}$ for state $\ket{\psi_0}$ and time $t\ll 1$, where $H_T$ is a given $k$-local Hamiltonian and $H_{QS}$---the optimization variable---is $k'$-local:
  \begin{equation}\label{eq:mins_HSnorm}
  \begin{split}
	  \min &\ \norm{(\mathds{1}-itH_T)\ket{\psi_0}-(\mathds{1}-itH_{QS})\ket{\psi_0}}_{\text{HS}}\\
	   \text{\rm s.t.}&\  \Tr(H_{QS}\; \Lambda^{(j)}_i)=0 \; \; \forall i,
  \end{split}
  \end{equation}
  where $\Lambda^{(j)}_i$ are the generators of all possible $j$-body interactions, with $j>k'$.
\end{proposition}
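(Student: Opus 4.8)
The plan is to show that the stated program is literally a faithful rewriting of the short-time simulation objective, so that ``verifying'' the proposition amounts to (i) expanding the fidelity/overlap to first order in $t$, (ii) recognizing the resulting quantity as the Hilbert--Schmidt distance between the two first-order-evolved states, and (iii) checking that the locality restriction on $H_{QS}$ is correctly encoded by the linear constraints $\Tr(H_{QS}\,\Lambda^{(j)}_i)=0$ for $j>k'$. First I would write $e^{-itH_T}\ket{\psi_0}=(\mathds{1}-itH_T)\ket{\psi_0}+\mathcal{O}(t^2)$ and likewise for $H_{QS}$, so that, up to terms of order $t^2$, the state produced by the simulator differs from the target state by exactly the vector $(\mathds{1}-itH_T)\ket{\psi_0}-(\mathds{1}-itH_{QS})\ket{\psi_0}=-it(H_T-H_{QS})\ket{\psi_0}$. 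Minimizing its norm is therefore the natural figure of merit: by the relation between fidelity and vector distance for normalized (to first order) states, making $\norm{\ket{\psi(t)}-\ket{\phi(t)}}$ small forces $|\braket{\psi(t)}{\phi(t)}|$ close to $1$, i.e.\ makes $\epsilon$ in Definition~1 small. I would make this last implication explicit with the elementary bound $|\braket{u}{v}|\geq 1-\tfrac12\norm{u-v}^2$ valid for unit vectors $u,v$, truncated consistently to the order in $t$ being kept.

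Next I would justify the constraint set. The generators $\Lambda^{(j)}_i$ of all $j$-body interactions (tensor products of Pauli operators acting nontrivially on exactly $j$ sites, together with the identity) form an orthogonal basis of the Hermitian operators on $n$ qubits with respect to the HS inner product $\langle A,B\rangle=\Tr(A^\dagger B)$. Hence any Hermitian $H_{QS}$ admits a unique expansion $H_{QS}=\sum_{j,i}c^{(j)}_i\Lambda^{(j)}_i$ with $c^{(j)}_i\propto\Tr(H_{QS}\Lambda^{(j)}_i)$, and $H_{QS}$ is $k'$-local (in the strong sense of containing no interaction term on more than $k'$ parties) if and only if $c^{(j)}_i=0$ for every $j>k'$, which is exactly $\Tr(H_{QS}\Lambda^{(j)}_i)=0$ for all $i$ and all $j>k'$. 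This shows the feasible set of the program is precisely the set of admissible simulator Hamiltonians, so the program's optimum coincides with the best achievable short-time HS distance. Finally I would note that the objective is a norm of an affine function of the optimization variable $H_{QS}$ and the constraints are linear equalities in $H_{QS}$, so the program is convex (indeed expressible as a second-order cone / semidefinite program), and hence the optimum is attained and efficiently computable, completing the justification of the proposition's content.

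The main obstacle, and the only place where real care is needed, is the control of the truncation: one must be precise about in what sense ``$t\ll1$'' makes the $\mathcal{O}(t^2)$ terms negligible and about which normalization convention is used, since $(\mathds{1}-itH_T)\ket{\psi_0}$ is not exactly a unit vector. I would handle this by keeping all estimates to first order in $t$ consistently --- the difference vector is $-it(H_T-H_{QS})\ket{\psi_0}$ exactly at this order, its norm is $t\norm{(H_T-H_{QS})\ket{\psi_0}}$, and the induced infidelity is $\tfrac{t^2}{2}\norm{(H_T-H_{QS})\ket{\psi_0}}^2+\mathcal{O}(t^3)$ --- so that minimizing the program's objective minimizes the leading infidelity term. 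A secondary, minor point is to remark that the program optimizes at a fixed input state $\ket{\psi_0}$ rather than in the worst case over states; if a state-independent guarantee is wanted one simply replaces $\ket{\psi_0}$ by a worst-case choice, which amounts to minimizing the operator norm $\norm{(H_T-H_{QS})P}_2$ on the relevant subspace and changes the program only cosmetically.
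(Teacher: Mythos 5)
Your proposal is correct and follows essentially the same route as the paper, which states Proposition~\ref{systematic} without a separate proof because it holds by construction: one truncates both evolutions at first order in $t$ (the regime where $\mathcal{O}(t^2)$ terms are declared negligible), encodes $k'$-locality of the variable $H_{QS}$ through HS-orthogonality to every generator $\Lambda^{(j)}_i$ with $j>k'$, and observes, as you do, that the truncated states are unnormalized and that the resulting affine-objective, linearly constrained (conic) program is solvable with standard solvers. The only imprecision is in your side remark on fidelity: at order $t^2$ the infidelity is governed by the variance of $H_T-H_{QS}$ in $\ket{\psi_0}$ rather than by $\tfrac{t^2}{2}\norm{(H_T-H_{QS})\ket{\psi_0}}^2$ (the expectation-value contribution can be absorbed into a multiple of the identity), so your expression is only an upper bound---but this does not affect the proposition itself.
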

Notice that the states $(\mathds{1}-itH_X)\ket{\psi_0}$ are not normalized. It is possible to find a solution for this conic program using state-of-the-art solvers like \cite{mosek}. In Fig. \ref{fig:shortimes} we show the minimum HS norm between states \mbox{$\ket{\psi(t)}=e^{-itH_{T}}\ket{\psi_0}$} and \mbox{$\ket{\phi(t)}=e^{-itH_{QS}}\ket{\psi_0}$}, for several paradigmatic initial states $\ket{\psi_0}$. Here, $H_T$ is a random \mbox{3-qubit} 3-local Hamiltonian in 1D, and $H_{QS}$ is a 3-qubit 2-local Hamiltonian in 1D.  
The evolution of the W state can be reproduced exactly in this case. For the rest of the initial states, the plotted distance scales linearly with time in this short time regime. 

\begin{figure}
\centering
\includegraphics[width=0.9\linewidth]{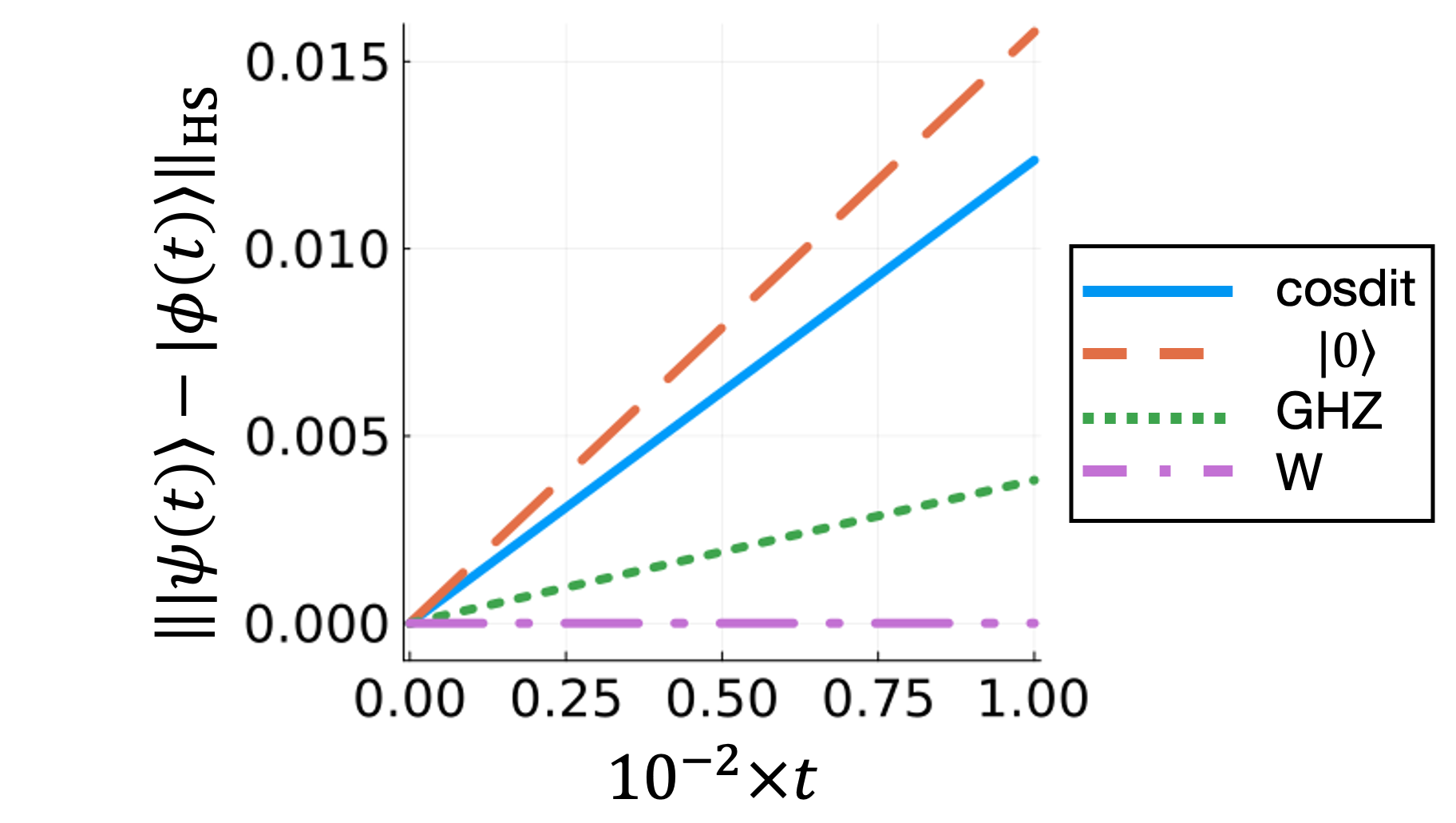}
\caption{Minimum HS norm between states \mbox{$\ket{\psi(t)}=e^{-itH_{T}}\ket{\psi_0}$} and $\ket{\phi(t)}=e^{-itH_{QS}}\ket{\psi_0}$, where $\ket{\psi_0}$ are different initial states (a cosdit $\frac{1}{\sqrt{8}}\sum_{i=1}^{8}\ket{i}$, the  $\ket{0}$ state, the GHZ state and the W state), $H_T$ is a random 3-qubit 3-local Hamiltonian in 1D, and $H_{QS}$ is a 3-qubit 2-local Hamiltonian in 1D.} 
\label{fig:shortimes}
\end{figure}

\section{Discussion}

We have explored the dynamic simulation of $k$-local Hamiltonians using more local Hamiltonians acting on the same  Hilbert space, an approach that aligns effectively with the present experimental limitations: as argued in the introduction, scaling up simulation platforms to accommodate remarkably larger simulator Hamiltonians is still a challenging task. 
In the exact simulation scenario, we have confirmed that some Hamiltonians with different localities can produce the same dynamics at certain subsets of states even if they do not commute. 
Further, we have analyzed the more realistic scenario where the dynamics is reproduced up to some error and provided a lower bound to the worst-case fidelity with which any Hamiltonian can simulate another one. Based on this, we have numerically shown evidence that the spectral diameter decreases as the locality increases, and this suggests that the simulation can be performed more precisely.
Moreover, we have presented a program to find the Hamiltonian that best simulates  a given $k$-local Hamiltonian in the short-time regime and solved it for a particular physical scenario.

The relation between the worst-case fidelity of simulation and the spectral diameter of the corresponding connector has been unveiled. This has allowed us to study  dynamic simulatability without requiring diagonalization, which becomes computationally expensive when considering large systems. 
In turn, the relation between locality and spectral diameter needs to be further investigated in the line of \cite{Harrow2017}, which would help us understand how to better engineer quantum simulation settings. 
Also,  further research could be conducted in the spirit of \cite{garciadiaz2022} to calculate  the relative volume of the simulatable sets of states, which would shed light on the potential of simulatability of each particular scenario.

\section*{Acknowledgements}
Discussions with \'Alvaro M. Alhambra and Tamara Kohler and comments from the referees are gratefully appreciated.
A.U. is financially supported by Maria de Maeztu project (Grant CEX2019-000918-M funded by MCIN/AEI/10.13039/501100011033). MGD was supported by grant PID2020-113523GB-I00, funded by
MCIN/AEI/10.13039/501100011033. We acknowledge support from the Spanish Agencia Estatal de Investigaci\'{o}n and the Ministerio de Ciencia e Innovaci\'{o}n (grants PID2019-107609GB-I00, PID2022-141283NB-I00). A.S. also acknowledges financial support from the European Commission QuantERA grant ExTRaQT (Spanish MICINN project PCI2022-132965), from the Spanish Ministry of Economic Affairs and Digital Transformation QUANTUM ENIA project call - Quantum Spain project, and by the European Union through the Recovery, Transformation and Resilience Plan - NextGenerationEU within the framework of the "Digital Spain 2026 Agenda", and by the Spanish MICINN with funding from European Union NextGenerationEU (PRTRC17.I1) and the Generalitat de Catalunya.

\bibliography{bibliografia}

\onecolumngrid

\appendix

\section{Illustration of exact simulation with a
Hamiltonian that is different from the target one}
\label{app:illustration}

We aim at showing two examples of Hamiltonians with different localities which  generate the same dynamics.  
We consider Eq.~\eqref{eq:H3local} as the target Hamiltonian and two different simulator Hamiltonians as discussed below. 

\subsection{Example 1: a simulator that commutes with $H_T$} \label{app:example1}

As a simulator, we have considered the XXX model with periodic boundary condition, 
\begin{equation} \label{eq:XXXmodel}
    H_{QS}
    =
    J\sum_{j=1}^4
    \left(
    \sigma_x^j\sigma_x^{j+1}
    +
    \sigma_y^j\sigma_y^{j+1}
    + 
    \sigma_z^j\sigma_z^{j+1}
    \right)
    ,
\end{equation}
which commutes with $H_T$ for any $J$, $J_3$, and $h_x$, as mentioned in the main text. 
We take $J_3$ as the energy unit and here fix $h_x/J_3=1$, although this ratio can take
any real value. 
We tune $J/J_3$ such that $h=H_{QS}-H_{T}$ has some degeneracy. 

In terms of how to find the shared eigenvectors, we refer to Appendix~\ref{app:shared_commuting}. 
We define the eigenvalues and the eigenvectors of $H_T$ as $\{\lambda_j^T\}_j$ and $\{\phi_j^{T}\}_j$, and the eigenvectors $\{\phi_j^{T}\}_j$ are tailored to diagonalise $H_{QS}$ as well. 
The eigenvalues $\{\lambda_j^{QS}\}_j$ of $H_{QS}$ are linearly dependent on $J$, and, by denoting the eigenvalues of $H_{QS}$ for $J/J_3=1$ with $\{\tilde{\lambda}_j^{QS}\}_{j}$, we have $\{\lambda_j^{QS}\}_j=J\{\tilde{\lambda}_j^{QS}\}_j$. By comparing $\{\lambda_j^{QS}\}_j$ and $\{\lambda_j^{T}\}_j$ and changing $J$, one can create some degeneracy in $h$.

We here show 
a few cases where $h$ has two degenerate eigenstates. 
Fig.~\ref{fig:ex1} displays the parameter $J/J_3$ and the degenerate eigenstates  of $h$ of such cases. 
The same dynamics is generated by $H_T$ and by $H_{QS}$ as long as the initial state is spanned by 
such eigenstates.

\begin{figure}[ht]
\centering
\includegraphics[width=.8\linewidth]{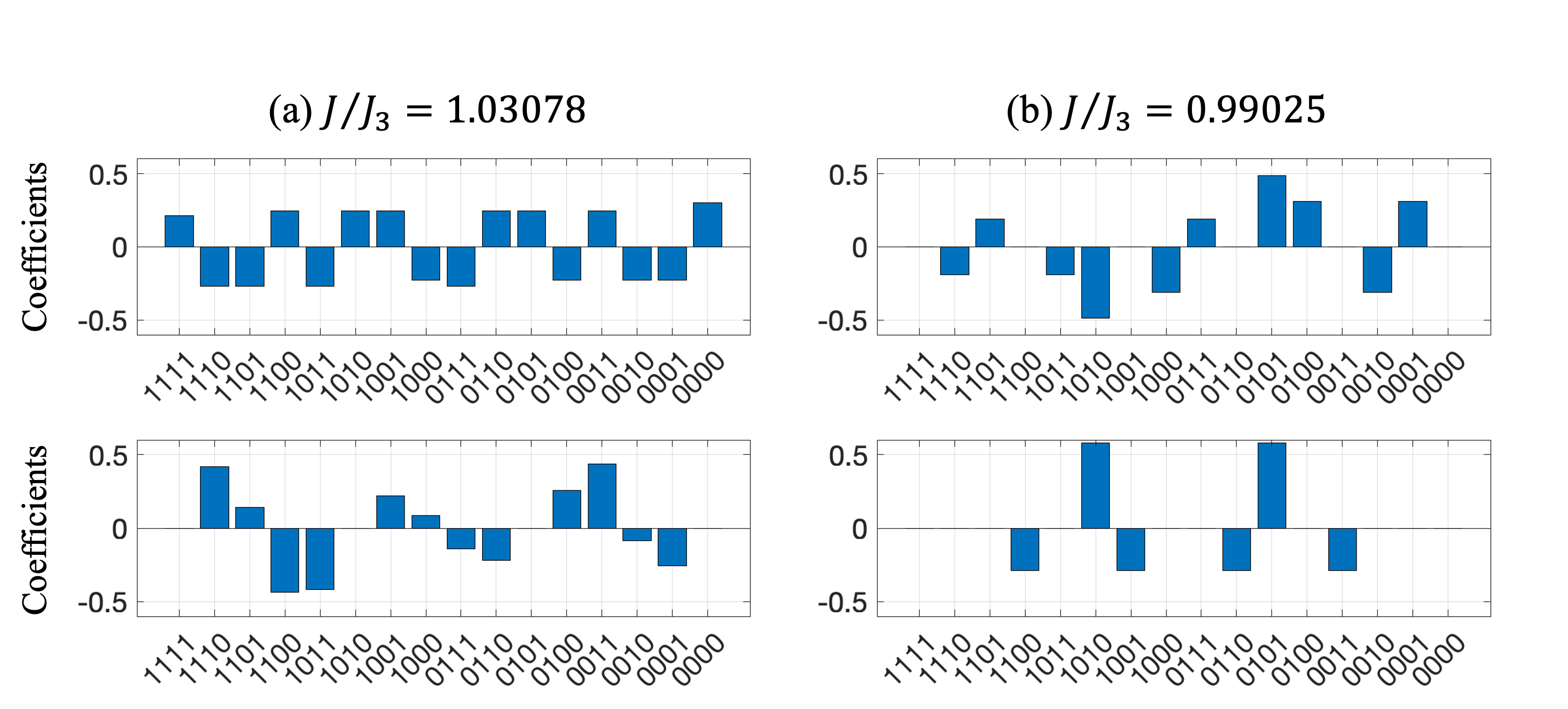}
\caption{
Coefficients of  two degenerate eigenstates of $h$ and  value of the parameter $J/J_3$. The x label denotes the spin basis, where ``0'' means spin down and ``1'' means spin up. 
}
\label{fig:ex1}
\end{figure}

\subsection{Example 2: a simulator that does not commute with $H_T$} \label{app:example2}

Consider a simulator $\tilde{H}_{QS}$ that does not commute with $H_T$, i.e. they may share only part of their eigenvectors. 
Our simulator is now a 2-local Heisenberg XXX model with 1-body terms and periodic boundary condition,
\begin{align}
     \tilde{H}_{QS}
     &=
     J \sum_{j=1}^4
     \left(
     \sigma_x^j\sigma_x^{j+1}
     +
     \sigma_y^j\sigma_y^{j+1}
     +
     \sigma_z^j\sigma_z^{j+1}
     \right)
     \nonumber\\
     &\quad
     +b_x\sum_{j=1}^4\sigma_x^j
     +b_y\sum_{j=1}^4\sigma_y^j
     +b_z\sum_{j=1}^4\sigma_z^j,  
\end{align}
where these 1-body terms prevent $\tilde{H}_{QS}$ from commuting with $H_T$. 
Again, we take $J_3$ as the energy unit. 

First, we search for the shared subspace. We refer the details to Appendix~\ref{app:shared_noncommuting}, but note that the size of the shared subspace corresponds to the nullity of the commutator, i.e., the number of vanishing
eigenvalues of the commutator. We choose the parameter set $\{b_x,b_y,b_z\}$ such that 
the nullity is large, while $J$ is not involved in the eigenvalues of the commutator due to the fact that $[H_{QS},H_T]=0$. Thus, we take $\{b_x/J_3,b_y/J_3,b_z/J_3\}=\{-4,0,1\}$ and in this case the number of  shared eigenstates is 12. The set of the shared eigenstates is obtained by constructing a proper linear combination of the eigenstates of the commutator (see Appendices~\ref{app:shared_commuting}~\ref{app:shared_noncommuting}).

The only thing left is to create 
degeneracy in $\tilde{h}=H_{QS}-\tilde{H}_{T}$ in the corresponding subspace as we do in Appendix~\ref{app:example1}. By tuning $J$, we have created some degeneracy in $\tilde{h}$. 
Fig.~\ref{fig:ex2} displays the parameter $J/J_3$ and the degenerate eigenstates of $\tilde{h}$ of such cases. 

\begin{figure}[ht]
\centering
\includegraphics[width=.4\linewidth]{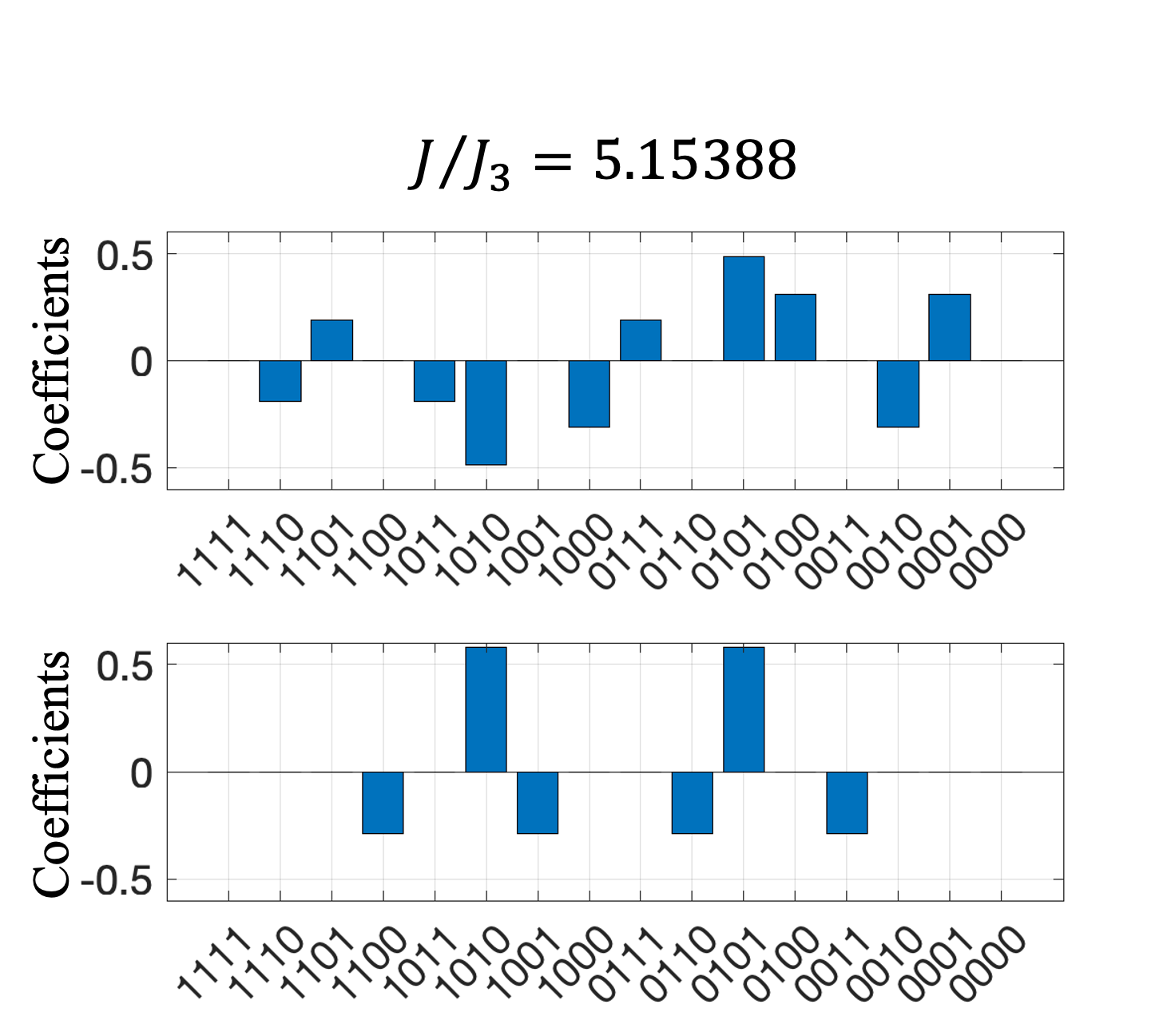}
\caption{
Coefficients of two eigenstates degenerate in $\tilde{h}$ and the value of parameter $J/J_3$. The x label denotes the spin basis, where ``0'' means down spin and ``1'' means up spin. 
}
\label{fig:ex2}
\end{figure}

\section{Shared eigenstates of two commuting Hamiltonians} \label{app:shared_commuting}

Two commuting Hamiltonians share all of their eigenstates. One can construct the shared eigenstates by using diagonalization, but there is an extra step to take if these Hamiltonians are both degenerate. Below, we explain how the shared eigenvectors of two commuting Hamiltonians are obtained. Our discussion is general but tailored for use in Appendix~\ref{app:illustration}. 

Suppose two commuting Hamiltonians $H_{\mathrm{A}}$ and $H_{\mathrm{B}}$.
We consider two cases: (i) the case where the Hamiltonian $H_{\mathrm{A}}$ is not degenerate and (ii) the case where the Hamiltonian $H_{\mathrm{A}}$ has one set of  degenerate eigenstates. 

Let us start with  case~(i). We define $\ket{\psi_n}$ as the $n$-the eigenstate of $H_{\mathrm{A}}$,
\begin{align}
    H_{\mathrm{A}}\ket{\psi_n}
    &=
    a_n \ket{\psi_n}
    .
\end{align}
We apply $H_{\mathrm{A}}H_{\mathrm{B}}$ on the state $\ket{\psi_n}$ and use $[H_{\mathrm{A}},H_{\mathrm{B}}]=0$, 
\begin{align}\label{eq:HaHbket}
    H_{\mathrm{A}}
    H_{\mathrm{B}}
    \ket{\psi_n}
    &=
    H_{\mathrm{B}}
    H_{\mathrm{A}}
    \ket{\psi_n}
    \nonumber\\
    &=
    a_n
    H_{\mathrm{B}}
    \ket{\psi_n}
    .
\end{align}
This means that, if $H_{\mathrm{B}}\ket{\psi_n}\neq 0$, $H_{\mathrm{B}}\ket{\psi_n}$ is an eigenstate of $H_{\mathrm{A}}$. 
Since $H_{\mathrm{A}}$ is not degenerate, $\ket{\psi_n}$ is only one eigenstate having $a_n$ as its eigenvalue. Thus, this eigenstate $H_{\mathrm{B}}\ket{\psi_n}$ is essentially the same state as $\ket{\psi_n}$, and they differ  only by a constant. We call this constant $b_n$, and therefore,
\begin{align}\label{eq:psineigenHB}
    H_{\mathrm{B}}
    \ket{\psi_n}
    &=
    b_n
    \ket{\psi_n}
    .
\end{align}
If $H_{\mathrm{B}}\ket{\psi_n}=0$, we say that $\ket{\psi_n}$ is an eigenstate of $H_{\mathrm{B}}$ with its eigenvalue $b_n=0$. 

Next, we consider  case~(ii) where $H_{\mathrm{A}}$ has a set of degenerate eigenstates. We cannot follow the same process as in case~(i) because, when we obtain Eq.~\eqref{eq:psineigenHB}, we rely on the fact that all the eigenvalues are different. 
Suppose that the $n$-th eigenvalue $a_n$ is $g$-fold degenerate. Let us define $\ket{\psi_n^r}$ for $r=1,2,\ldots,g$ as its $g$-fold degenerate eigenstates: $H_{\mathrm{A}}\ket{\psi_n^r}=a_n\ket{\psi_n^r}$.
The set of $\{\ket{\psi_n^r}\}_r$ is chosen such that they are orthonormalised via the Gram–Schmidt process.
The state $H_{\mathrm{B}}\ket{\psi_n^r}$ can be expanded in a set of the degenerate eigenstates $\{\ket{\psi_n^r}\}_r$ as
\begin{align}
    H_{\mathrm{B}}
    \ket{\psi_n^r}
    &=
    \sum_{s=1}^{g}
    c_{r,s}\ket{\psi_n^s}
\end{align}
with a set of coefficients $c_{r,s}$, which is given by $\bra{\psi_n^{s}}H_{\mathrm{B}}\ket{\psi_n^r}$ because
\begin{align}
    \bra{\psi_n^{r'}}
    H_{\mathrm{B}}
    \ket{\psi_n^r}
    &=
    \sum_{s=1}^{g}
    c_{r,s}\braket{\psi_n^{r'}}{\psi_n^s}
    \nonumber\\
    &=
    c_{r,r'}
    .
\end{align}
Consider superposition between the degenerate states, $\sum_{r=1}^g d_r \ket{\psi_n^r}$ with $d_r$ coefficient. We apply $H_{\mathrm{B}}$ to this state, 
\begin{align}
    H_{\mathrm{B}}
    \sum_{r=1}^g
    d_r
    \ket{\psi_n^r}
    &=
    \sum_{r=1}^g
    \sum_{s=1}^{g}
    d_r
    c_{r,s}\ket{\psi_n^s}
    \nonumber\\
    &=
    \sum_{s=1}^{g}
    \left(
    \sum_{r=1}^g
    c_{r,s}
    d_r
    \right)
    \ket{\psi_n^s}
\end{align}
The state $\sum_{r=1}^g d_r\ket{\psi_n^r}$ is an eigenstate of $H_{\mathrm{B}}$ with its eigenvalue $b_n$ if 
\begin{align}
    \sum_{r=1}^g
    c_{r,s}
    d_r
    &=
    b_n
    d_s
    .
\end{align}
Considering all the elements, this can be represented in matrix form, 
\begin{align}
    c \vec{d}
    &=
    b_n \vec{d}
    .
\end{align}
The set of $b_n$ and $\{d_r\}_r$ that satisfy the above equation is given by the eigenvalues and the eigenvectors of matrix $c$, respectively. We define the eigenvalues as $b_n^{(k)}$ and the eigenvectors as $\{d_r^{(k)}\}_r$ for $k=1,2,\ldots,g$. 
Therefore, the eigenvector of $H_{\mathrm{B}}$ with 
eigenvalue $b_n^{(k)}$ is given by $\sum_{r=1}^g d_r^{(k)} \ket{\psi_n^r}$. Of course, the eigenvalue of $H_{\mathrm{A}}$ with 
eigenvector $\sum_{r=1}^g d_r^{(k)} \ket{\psi_n^r}$ is $a_n$.

The point here is that any linear combination of $\ket{\psi_n^r}$ for all $r$ can diagonlise $H_{\mathrm{A}}$ but may not diagonlise $H_{\mathrm{B}}$.
Therefore, it is necessary to find a linear combination of $\ket{\psi_n^r}$ that is an eigenstate of $H_{\mathrm{B}}$.
If $H_A$ has multiple sets of degenerate states, we take this procedure for each set. 

Note that the degeneracy of $H_{\mathrm{B}}$ does not matter in this process. Thus, if it is known that $H_{\mathrm{A}}$ is degenerate but $H_{\mathrm{B}}$ is not, it is efficient to use the eigenvectors of $H_{\mathrm{B}}$ as the shared eigenvectors instead. 

\section{Shared subspace of two non-commuting Hamiltonians} \label{app:shared_noncommuting}

Even two non-commuting Hamiltonians may share some of their eigenstates. 
We explain how to find such shared eigenvectors below, and our discussion is general but tailored
for use in Appendix~\ref{app:example2}.
First, consider two non-commuting Hamiltonians $\hat{H}_{\mathrm{A}}$, $\hat{H}_{\mathrm{B}}$  and calculate the commutator,
\begin{align}
    \left[
    \hat{H}_{\mathrm{A}},\hat{H}_{\mathrm{B}}
    \right]
    &=
    C
    ,
\end{align}
where $C$ is a traceless matrix. Denoting a set of eigenvectors of the commutator with $\{\ket{\phi_j}\}_j$, we can have
\begin{align}
    \sum_{j=1}^N
    \ketbra{\phi_j}{\phi_j}
    \left[
    \hat{H}_{\mathrm{A}},\hat{H}_{\mathrm{B}}
    \right]
    \ketbra{\phi_j}{\phi_j}
    &=
    0^{\oplus r}\oplus \tilde{C}
\end{align}
with $N$ the system size, $r\geq 0$ some integer and $\tilde{C}$ a diagonal and traceless matrix. The subspace spanned by the first $r$ eigenvectors $\{\ket{\phi_j}\}_{j=1,\ldots,r}$ is the shared subspace, where the commutator is zero. 

The first $r$ eigenvectors $\{\ket{\phi_j}\}_{j=1,\ldots,r}$ may not diagonalise $\hat{H}_{\mathrm{A}}$ and $\hat{H}_{\mathrm{B}}$ in the subspace. It is the same reason why  the eigenvectors of commuting degenerate Hamiltonians may not diagonalise one of the Hamiltonians as mentioned in Appendix~\ref{app:shared_commuting}, and essentially these eigenvectors $\{\ket{\phi_j}\}_{j=1,\ldots,r}$ are degenerate in the commutator. One can find a linear combination of $\{\ket{\phi_j}\}_{j=1,\ldots,r}$ that diagonalises $\hat{H}_{\mathrm{A}}$ and $\hat{H}_{\mathrm{B}}$ in the subspace by following the procedure introduced in Appendix~\ref{app:shared_commuting}.  

\section{Necessary condition for a state to be an
eigenstate of both a $k$-local and a $k'$-local
Hamiltonian}\label{necessary_eig}

Consider the set of $k$-local Hamiltonians, with basis $\{\Lambda_i\}$ and $\Lambda_i\in {\text{Herm}}(\cal{H})$. For a given state $\ket{\psi}$,  the corresponding correlation matrix is defined as 
\begin{equation}
    M_{ij}^{(k,\psi)}=\frac{1}{2}\bra{\psi}\{\Lambda_i,\Lambda_j\}\ket{\psi}-\bra{\psi}\Lambda_i\ket{\psi}\bra{\psi}\Lambda_j\ket{\psi}. 
\end{equation}
If the kernel of $M^{(k,\psi)}$ is not empty, then there exists at least one $k$-local Hamiltonian that is parent to $\ket{\psi}$~\cite{Qi2019}. Since $M^{(k,\psi)}$ is Hermitian, it will have a non-empty kernel  if and only if $\det M^{(k,\psi)}=0$. 

Now, recall that the set $\{\Lambda_i\}$ contains the generators of every possible $j$-body interaction, with $j\leq k$. Therefore, it contains the generators of all  $k$-body interactions and the generators of all $k'$-body interactions, with $k'<k$. This implies that $M^{(k,\psi)}$ is a block matrix of the form
\begin{equation}
  M^{(k,\psi)}=  
  \begin{pmatrix}
  \begin{matrix}
  {\cal{M}}_{kk}
  \end{matrix}
  & \rvline & {\cal{M}}_{kk'} \\
\hline
  {\cal{M}}_{k'k} & \rvline &
  \begin{matrix}
  {\cal{M}}_{k'k'}
  \end{matrix}
\end{pmatrix},
\end{equation}
where ${\cal M}_{kk'}$ encompasses the correlations between the generators of the \mbox{$k$-body} and the $k'$-body interactions. The following lemma holds:

\begin{lemma}
If a state $\ket{\psi}$ is an eigenstate of both a $k$-local (with non-vanishing $k'$-local terms) and a $k'$-local Hamiltonian ($k'<k$), then
\begin{equation}
    \sum_{r=1}^{N-1}\sum_{\alpha,\beta}(-1)^{s(\alpha)+s(\beta)}\det A[\alpha|\beta]\det B(\alpha|\beta)=-\det B,
\end{equation}
where  $A={\cal M}_{k'k'}$, $B=-{\cal M}_{k'k}{\cal M}_{kk}^{-1}{\cal M}_{kk'}$, and \mbox{$N=\dim A=\dim B$}. For a particular $r$, the inner sum is taken over all strictly increasing integer sequences $\alpha$ and $\beta$ of length $r$, from 1 to $N$. $A[\alpha|\beta]$ is the $r$-square submatrix of $A$ lying in rows $\alpha$ and columns $\beta$,  $B(\alpha|\beta)$ is the $(N-r)$-square submatrix of $B$ lying in rows complementary to $\alpha$ and columns complementary to $\beta$, and $s(\alpha)$ is the sum of the integers in $\alpha$.
\end{lemma}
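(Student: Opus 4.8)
\medskip\noindent\textbf{Proof strategy.}

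The plan is to encode each eigenstate hypothesis as the vanishing of a determinant and then to expand a Schur complement of the correlation matrix $M^{(k,\psi)}$. First I would note that if $\ket\psi$ is an eigenstate of a nonzero $k$-local Hamiltonian $H=\sum_i c_i\Lambda_i$ then $\mathrm{Var}_\psi(H)=0$, and since $\bra\psi H^2\ket\psi-\bra\psi H\ket\psi^2=\vec c^{\,\mathsf T}M^{(k,\psi)}\vec c$ while $M^{(k,\psi)}\succeq 0$, this forces $M^{(k,\psi)}\vec c=0$ with $\vec c\neq 0$; this is the non-empty-kernel condition of \cite{Qi2019}, so $\det M^{(k,\psi)}=0$. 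Running the identical argument with the $k'$-local Hamiltonian that also has $\ket\psi$ as an eigenstate, and using that the block $A=\mathcal M_{k'k'}$ coincides with the $k'$-local correlation matrix $M^{(k',\psi)}$, I also obtain $\det A=0$.

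Next I would invoke the Schur complement with respect to the top-left block; since $B=-\mathcal M_{k'k}\mathcal M_{kk}^{-1}\mathcal M_{kk'}$ already appears in the statement, $\mathcal M_{kk}$ is invertible, and
\begin{equation}
  \det M^{(k,\psi)}=\det\mathcal M_{kk}\cdot\det\!\bigl(\mathcal M_{k'k'}-\mathcal M_{k'k}\mathcal M_{kk}^{-1}\mathcal M_{kk'}\bigr)=\det\mathcal M_{kk}\cdot\det(A+B),
\end{equation}
so $\det M^{(k,\psi)}=0$ together with $\det\mathcal M_{kk}\neq 0$ gives $\det(A+B)=0$. I would then expand $\det(A+B)$ by multilinearity in the columns followed by Laplace expansions to reach the classical identity
\begin{equation}
  \det(A+B)=\sum_{r=0}^{N}\ \sum_{\alpha,\beta}(-1)^{s(\alpha)+s(\beta)}\,\det A[\alpha|\beta]\,\det B(\alpha|\beta),
\end{equation}
with the summation over strictly increasing length-$r$ sequences and the conventions $\det A[\emptyset|\emptyset]=\det B(\emptyset|\emptyset)=1$. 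The $r=0$ term equals $\det B$ and the $r=N$ term equals $\det A$ (its sign being $(-1)^{2(1+\cdots+N)}=+1$); substituting $\det A=0$ and $\det(A+B)=0$ then collapses the identity to exactly the asserted equation after moving $\det B$ to the right-hand side.

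The hard part is not conceptual but a matter of bookkeeping: verifying that the signs produced by the iterated Laplace expansion assemble into the single factor $(-1)^{s(\alpha)+s(\beta)}$ and that the index families are precisely those stated; for that I would simply cite the known expansion of $\det(A+B)$ (e.g.\ in \cite{horn2012matrix}) rather than re-derive it. The one hypothesis that deserves care is the invertibility of $\mathcal M_{kk}$, which is tacitly required both for $B$ to be defined and for the Schur-complement identity to apply; this is presumably what the assumption of non-vanishing $k'$-local terms (together with $\ket\psi$ being a genuine eigenstate) is meant to guarantee, and should it fail one would run the argument with $\mathcal M_{kk}+\varepsilon\mathds 1$ and pass to the limit $\varepsilon\to 0$.
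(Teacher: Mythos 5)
Your proof is correct and follows essentially the same route as the paper's: the Schur-complement factorization $\det M^{(k,\psi)}=\det\mathcal M_{kk}\det(A+B)$ with $\det\mathcal M_{kk}\neq 0$, the classical expansion of $\det(A+B)$ over complementary minors, and the two vanishing determinants $\det M^{(k,\psi)}=0$ and $\det A=0$ coming from the eigenstate (parent-Hamiltonian) conditions. Your added variance argument for the kernel condition and the $\varepsilon$-regularization remark for $\mathcal M_{kk}$ are sound refinements of details the paper handles by citation and by assumption, not a different method.
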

\begin{proof}
Since we do not want $\ket{\psi}$ to be an eigenstate only of a $k$-local Hamiltonian with vanishing $k'$-local terms, it is sound to assume that $\det {\cal M}_{kk}\neq 0$. Under this condition, 
\begin{equation}
    \det M^{(k,\psi)}=\det {\cal M}_{kk}\det({\cal M}_{k'k'}-{\cal M}_{k'k}{\cal M}_{kk}^{-1}{\cal M}_{kk'}),
\end{equation}
which is zero when 
\begin{equation}\label{det0}
  \det({\cal M}_{k'k'}-{\cal M}_{k'k}{\cal M}_{kk}^{-1}{\cal M}_{kk'})=0.  
\end{equation}
Now recall that, for $N$-square matrices $A$ and $B$ \cite{marcus1975finite}, 
\begin{eqnarray*}
    \det(A+B)&=&\det A+\det B +\sum_{r=1}^{N-1}(-1)^{s(\alpha)+s(\beta)}\det A[\alpha|\beta]\det B(\alpha|\beta).    
\end{eqnarray*}
Define $A={\cal M}_{k'k'}$ and $B=-{\cal M}_{k'k}{\cal M}_{kk}^{-1}{\cal M}_{kk'}$.  If we want $\ket{\psi}$ to also be an eigenstate of a $k'$-local Hamiltonian,  we must impose \mbox{$\det A=0$}. Together with Eq.(\ref{det0}), this yields 
\begin{equation*}
    \sum_{r=1}^{N-1}\sum_{\alpha,\beta}(-1)^{s(\alpha)+s(\beta)}\det A[\alpha|\beta]\det B(\alpha|\beta)=-\det B,
\end{equation*}
and the proof is completed. 
\end{proof}


\section{A weaker version of Theorem \ref{thm_nonexact}}\label{comparison_bounds}

The following theorem can be derived:
\begin{theorem}\label{thm_nonexact}
\cite{lidar2008,referee} Every Hamiltonian $H_{QS}$ \mbox{$\epsilon^*$-simulates} any target Hamiltonian $H_T$ at any state $\ket{\psi}$ and time $t$, with 
\begin{equation}
    \epsilon^*=\min\left[ 1,\dfrac{1}{2}\left(e^{t\Delta_{h}}-1\right), t\norm{h}_\text{HS}\right].
\end{equation}
Here, $\Delta_{h}$ is the spectral diameter of the connector, \mbox{$h:=H_{QS}-H_{T}$}, i.e., \mbox{$\Delta_{h}=\max_{ij}|\lambda_i-\lambda_j|$}, where $\{\lambda_i\}_{i=1}^N$ are the eigenvalues of $h$. 
\end{theorem}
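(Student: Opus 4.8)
The plan is to prove each of the three entries of the minimum separately and then combine them, along the lines of the worst-case-error argument already given in the main text. The entry $\epsilon^{*}=1$ is immediate: the left-hand side of~(\ref{def_simul}) is a modulus, hence nonnegative, so $1-\epsilon^{*}\ge 0$ always holds with $\epsilon^{*}=1$. For the entry $\tfrac{1}{2}(e^{t\Delta_{h}}-1)$ I would pass to the language of the induced unitary channels. Writing $\mathcal{E}^{QS}_{t}(\rho)=e^{-itH_{QS}}\rho\, e^{itH_{QS}}=e^{-it[H_{QS},\,\cdot\,]}\rho$ and likewise $\mathcal{E}^{T}_{t}$, Corollary~2 of~\cite{lidar2008} bounds the trace distance $D\bigl(\mathcal{E}^{QS}_{t}(\ketbra{\psi}{\psi}),\mathcal{E}^{T}_{t}(\ketbra{\psi}{\psi})\bigr)$ by $\min[1,\tfrac{1}{2}(e^{t\Delta_{h}}-1)]$; the quantity controlling the rate is precisely the spectral diameter of the connector $h=H_{QS}-H_{T}$, because the Liouvillian $[H,\,\cdot\,]$ is insensitive to identity shifts of $H$ and its effective ``size'' is the spread of the spectrum of $h$ rather than its norm. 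Since the two evolved states are pure, the Fuchs--van de Graaf-type bound $|\braket{\alpha}{\beta}|\ge 1-D(\ketbra{\alpha}{\alpha},\ketbra{\beta}{\beta})$ applies; identifying $|\braket{\alpha}{\beta}|$ with $|\bra{\psi}e^{itH_{QS}}e^{-itH_{T}}\ket{\psi}|$ gives the claimed lower bound carrying the exponential term.

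For the entry $t\norm{h}_{\text{HS}}$ I would reuse, almost verbatim, the integral estimate used in the main-text proof: bound $1-|\bra{\psi}e^{itH_{QS}}e^{-itH_{T}}\ket{\psi}|$ by $|1-\bra{\psi}e^{itH_{QS}}e^{-itH_{T}}\ket{\psi}|$, rewrite the latter as $\bigl|\bra{\psi}\int_{t}^{0}\!ds\,\tfrac{d}{ds}\bigl(e^{isH_{QS}}e^{-isH_{T}}\bigr)\ket{\psi}\bigr|$, carry out the differentiation and pull the norm inside to obtain the operator-norm bound $t\norm{H_{QS}-H_{T}}_{2}=t\norm{h}_{2}$, and finally invoke $\norm{X}_{2}\le\norm{X}_{\text{HS}}$ \cite{horn2012matrix}. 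Taking the minimum of the three bounds just derived yields the statement.

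The one step that requires care---and the closest thing to an obstacle here---is checking that the channel-separation result of~\cite{lidar2008} is applied with its hypotheses matched to the connector $h$: namely that the relevant generator parameter there is the spectral \emph{diameter} $\Delta_{h}$, and that one is free to recenter $H_{QS}$ and $H_{T}$ by multiples of the identity without affecting the left-hand side of~(\ref{def_simul}). Otherwise the content is just the assembly of two standard estimates. As a closing remark one should note that this theorem is strictly weaker than its main-text counterpart, since $\tfrac{t\Delta_{h}}{2}\le\tfrac{1}{2}(e^{t\Delta_{h}}-1)$ by $e^{x}-1\ge x$ for $x\ge 0$, and $\tfrac{\Delta_{h}}{2}=\min_{c}\norm{h+c\mathds{1}}_{2}\le\norm{h}_{2}\le\norm{h}_{\text{HS}}$; it is recorded here only for the sake of this comparison, which is the purpose of the appendix.
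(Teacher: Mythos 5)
Your proposal is correct. The exponential entry is handled exactly as in the paper: Corollary 2 of \cite{lidar2008} bounds the trace distance between the two unitarily evolved pure states in terms of the spectral diameter of the connector (and indeed the generator $[H,\,\cdot\,]$ is blind to identity shifts, so $\Delta_h$ is the right parameter), and the bound $F\geq 1-D$ for pure states transfers this to the fidelity; the trivial entry $\epsilon^*=1$ needs no argument. Where you genuinely depart from the paper is the entry $t\norm{h}_{\text{HS}}$: you recycle the main-text derivative/integral estimate with $c=0$, obtaining $1-|\bra{\psi}e^{itH_{QS}}e^{-itH_T}\ket{\psi}|\leq t\norm{h}_2$, and then relax via $\norm{h}_2\leq\norm{h}_{\text{HS}}$. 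This is shorter and perfectly valid (in fact, the whole appendix theorem already follows from the main-text Theorem, since $t\Delta_h/2\leq\frac{1}{2}(e^{t\Delta_h}-1)$ and $\Delta_h/2=\min_c\norm{h+c\mathds{1}}_2\leq\norm{h}_{\text{HS}}$, as you note at the end). The paper instead gives a self-contained chain --- trace distance bounded by the Euclidean distance of the evolved vectors, Cauchy--Schwarz to pass to $\norm{e^{-itH_T}-e^{-itH_{QS}}}_{\text{HS}}$, and a telescoping-product argument yielding $\norm{e^{A}-e^{B}}_{\text{HS}}\leq\norm{A-B}_{\text{HS}}$ --- whose point is to exhibit how the bound $b_2(h,t)=t\norm{h}_{\text{HS}}$ arises from an independent technique, which is relevant because the appendix goes on to compare $b_1$ and $b_2$ as stand-alone estimates. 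So your route buys brevity at the cost of making the statement a corollary of the stronger main-text result rather than an independently derived alternative; both establish the theorem.
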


\begin{proof}
The bound $\epsilon^*=\min\left[ 1,\dfrac{1}{2}\left(e^{t\Delta_{h}}-1\right)\right]$ is based on a beautiful result derived in \cite{lidar2008}. There, the authors show that the trace distance between two states evolved under different time-dependent Hamiltonians, $\Omega(t)$ and $\Omega^0(t)$, fulfills \mbox{$D[\rho(t),\rho^0(t)]\leq\min \left[1,\frac{1}{2}(e^{t\Delta \lambda(t)}-1) \right]$}, where \mbox{$\rho(t)=e^{-it[\Omega(t),\;\cdot\;]}\rho(0)$}, $\rho^0(t)=e^{-it[\Omega^0(t),\;\cdot \;]}\rho(0)$, and $\Delta \lambda(t)$ is the spectral diameter of \mbox{$\Omega(t)-\Omega^0(t)$}. The derivation  is completed by applying that the fidelity between two pure states, $F(\ket{\psi},\ket{\phi})=|\braket{\psi}{\phi}|$, is lower bounded as \mbox{$F(\ket{\psi},\ket{\phi})\geq 1-D(\ketbra{\psi}{\psi},\ketbra{\phi}{\phi})$}, with \mbox{$D(\sigma,\tilde{\sigma})=\frac{1}{2}\tr |\sigma-\tilde{\sigma}|$}.

We now proceed to derive $\epsilon^*=\min\left[ 1,t\norm{h}_\text{HS}\right]$ \cite{referee}. 
First, we have that 
\begin{eqnarray}
        &&D(e^{-itH_T}\ketbra{\psi}{\psi}e^{itH_T},e^{-itH_{QS}}\ketbra{\psi}{\psi}e^{itH_{QS}})  \nonumber \\
        &\stackrel{(a)}{\leq}& \norm{(e^{-itH_T}-e^{-itH_{QS}})\ket{\psi}}_2 \nonumber\\
         &\stackrel{(b)}{\leq}& \norm{e^{-itH_T}-e^{-itH_{QS}}}_{\text{HS}} \nonumber\\
          &\stackrel{(c)}{\leq}& t\norm{h}_{\text{HS}}.\nonumber
\end{eqnarray}
        In (a) we have used that, for pure states $\ket{\psi}$ and $\ket{\phi}$, it holds that $D(\ketbra{\psi}{\psi},\ketbra{\phi}{\phi})\leq \norm{\ket{\psi}-\ket{\phi}}_2$, where $\norm{\cdot}_2$ is the Euclidean norm of a vector.  In (b) we have employed the Cauchy-Schwarz inequality and the fact that $\ket{\psi}$ is normalized. In (c) we have first applied the triangle inequality to a telescoping sum. For $A:=-itH_T$ and $B:=-itH_{QS}$ we get 
        \begin{eqnarray}
           && \norm{e^A-e^B}_\text{HS} \nonumber\\
           &=& \norm{\sum_{k=1}^m e^{(k-1)A/m}(e^{A/m}-e^{B/m})e^{(m-k)B/m}}_\text{HS}\nonumber\\
            &\leq& \sum_{k=1}^m \norm{e^{(k-1)A/m}(e^{A/m}-e^{B/m})e^{(m-k)B/m}}_\text{HS} \nonumber\\
            &=& m \norm{e^{A/m}-e^{B/m}}_\text{HS},
        \end{eqnarray}
for any  $m\in \mathbb{N}$. Note that the last equality holds since the HS norm is unitarily invariant. Finally, expanding $e^{A/m}$ and $e^{B/m}$ into power series and letting $m\rightarrow \infty$ immediately leads to $\norm{e^A-e^B}_\text{HS}\leq \norm{A-B}_\text{HS}$. Apply  \mbox{$F(\ket{\psi},\ket{\phi})\geq 1-D(\ketbra{\psi}{\psi},\ketbra{\phi}{\phi})$} and the proof is finished.
\end{proof}

Notice that this theorem is weaker than Theorem \ref{thm_nonexact}, since $\frac{1}{2}(e^{t\Delta_h}-1)\geq \frac{t\Delta_h}{2}$, and $\norm{h}_{\text{HS}}\geq \norm{h}_2$. 

What is more, neither of the bounds, $b_1(h,t):=\frac{1}{2}\left(e^{t\Delta_{h}}-1\right)$ nor $b_2(h,t):=t\norm{h}_\text{HS}$, can universally upper-bound the other.
We hereby present two examples of connectors leading to different behaviours of $b_1(h,t)$ with respect to $b_2(h,t)$.

First, we consider the randomly generated 3-local target Hamiltonian and the 2-local QS Hamiltonian of Fig. \ref{fig:nonuniversal}. As shown in the figure, as long as $b_1(h,t)$  and $b_2(h,t)$ are smaller than 1, $b_1(h,t)$ is smaller than $b_2(h,t)$ for times $t<0.6$, and greater than it otherwise.

Secondly, we take the 3-local target Hamiltonian of Proposition \ref{sdp_prop} and a 2-local QS Hamiltonian (see Fig. \ref{fig:universal}). As shown in the figure, as long as $b_1(h,t)$  and $b_2(h,t)$ are smaller than 1, $b_1(h,t)$ is always smaller than $b_2(h,t)$.

\begin{figure}[th]
  \centering
  \includegraphics[width=0.9\textwidth]{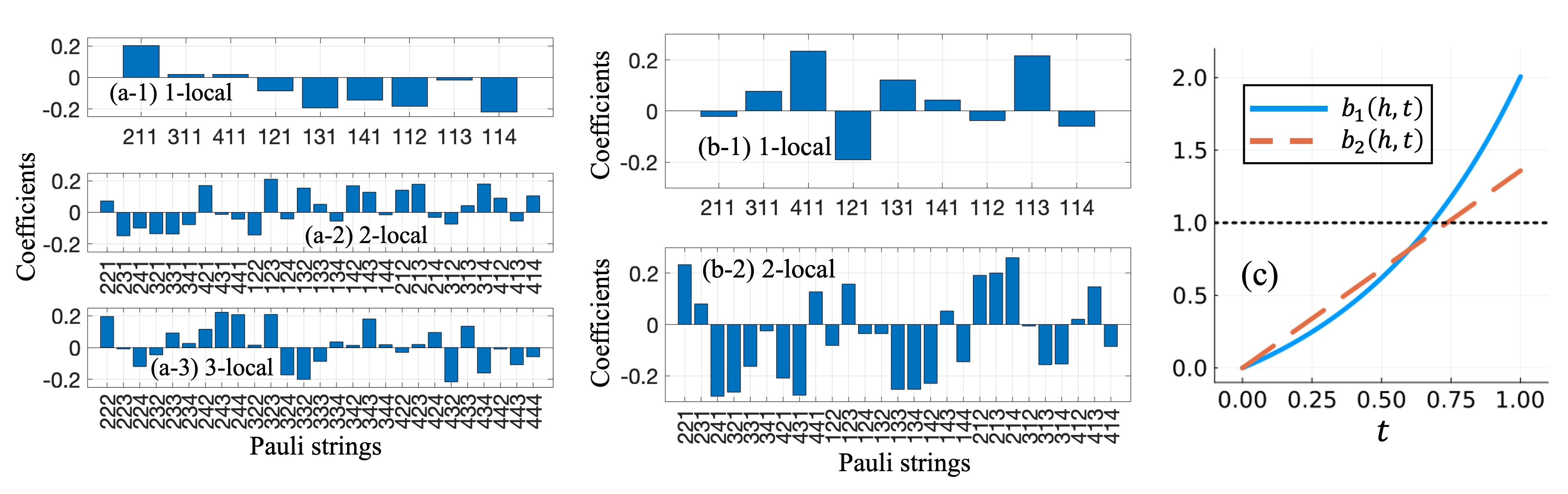}
  
  
  \caption{(a)~3-qubit $H_T$, (b)~3-qubit $H_{QS}$ and (c)~behaviour of the corresponding quantities $b_1(h,t)$ and $b_2(h,t)$. If the Hamiltonian is expressed as $H=\sum_l c_l P_l$, with $P_l\in\{\mathds{1},\sigma_x,\sigma_y,\sigma_z\}^{\otimes 3}$ a Pauli string, the x-axis represents the Pauli strings $ijk\equiv P_l$, with $1\equiv \mathds{1}$, $2\equiv \sigma_x$, $3\equiv \sigma_y$, and $4\equiv \sigma_z$. The corresponding coefficients, $c_l$, are shown in the y-axis.}
  \label{fig:nonuniversal}
\end{figure}

\begin{figure}[th]
  \centering
  \includegraphics[width=0.9\textwidth]{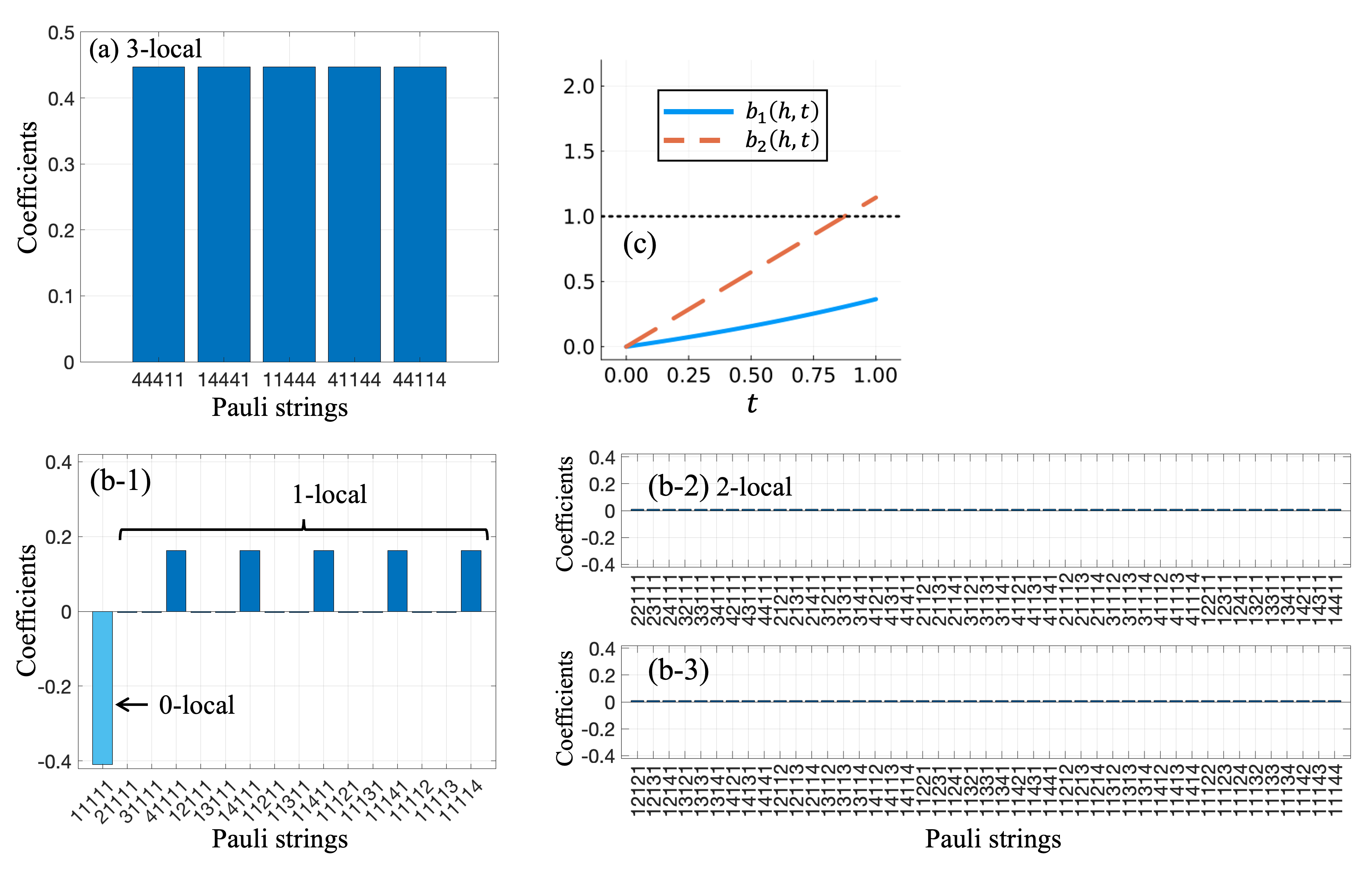}
\caption{(a)~5-qubit $H_T$, (b)~5-qubit $H_{QS}$ and (c)~behaviour of the corresponding quantities $b_1(h,t)$ and $b_2(h,t)$. If the Hamiltonian is expressed as $H=\sum_l c_l P_l$, with $P_l\in\{\mathds{1},\sigma_x,\sigma_y,\sigma_z\}^{\otimes 5}$ a Pauli string, the x-axis represents the Pauli strings $ijk\equiv P_l$, with $1\equiv \mathds{1}$, $2\equiv \sigma_x$, $3\equiv \sigma_y$, and $4\equiv \sigma_z$. The corresponding coefficients, $c_l$, are shown in the y-axis.
}
  \label{fig:universal}
\end{figure}

\end{document}